\newtheorem{definition}{Definition}
\newtheorem{mrule}{Rule}
\newtheorem{proposition}{Proposition}
\newtheorem{claim}{Claim}
\newtheorem{lemma}{Lemma}
\newcommand{\remove}[1]{}
\newcommand{\actm}{~{\small \sf activemonitor}~}
\newcommand{\lk}{{\small \sf LK}}
\newcommand{\am}{{\small \sf AM}}
\newcommand{\as}{{\small \sf AS}}
\newcommand{\abq}{{\small \sf ABQ}}
\begin{document}

\date{}
\title{ActiveMonitor: Non-blocking Monitor Executions
for Increased   
Parallelism}
%\remove{
\author{
Wei-Lun Hung~~~~~
Himanshu Chauhan\thanks{Contact Author. Address: ACES 5.432, 201 E 24 Street, Austin, Texas, USA}~~~~~
Vijay K. Garg\thanks{Supported in part by the NSF Grants CNS-1346245,
CNS-1115808, and Cullen Trust for Higher Education Endowed Professorship } \\
\\
The University of Texas at Austin\\
{\sf {\small \{wlhung,himanshu\}@utexas.edu~~~garg@ece.utexas.edu}}
}
%}
\remove{
\author{
 \IEEEauthorblockN{
Wei-Lun Hung~~~~~
Himanshu Chauhan\thanks{Contact Author. Address: ACES 5.432, 201 E 24 Street, Austin, Texas, USA}~~~~~
Vijay K. Garg \thanks{supported in part by the
NSF Grants  CNS-1115808, CNS-0718990, CNS-0509024,
    and Cullen Trust for Higher Education Endowed Professorship.}
 }
 \IEEEauthorblockA{
The University of Texas at Austin
    {\small \textsf{\{wlhung,himanshu\}@utexas.edu~~~~garg@ece.utexas.edu}}
     }
}
}
\maketitle
%\thispagestyle{empty}
%\begin{abstract}
We present a set of novel ideas on design and implementation  of monitor objects for
multi-threaded 
programs. Our approach has two main goals: (a) increase
parallelism in monitor objects and thus provide performance gains (shorter runtimes) for 
multi-threaded programs, and (b) introduce constructs that allow programmers to 
easily write monitor-based multi-threaded programs that can achieve these performance
gains. We describe the concepts of our framework, called {\em ActiveMonitor}, and its prototype implementation 
using {\em futures}\cite{future}. 
We evaluate its performance in terms of 
runtimes of multi-threaded programs 
%on number of operations 
%performed
on linked-list, bounded-buffer, and other
fundamental problems implemented in Java. We compare 
the runtimes of our implementation against 
implementations using Java's reentrant locks \cite{lea05}, recently 
proposed automatic signaling framework 
AutoSynch \cite{hg13}, and some other techniques from the literature.    
%For concurrent data structures provided by {\sffamily java.util.concurrent}
%\cite{lea05}, and transactional memory.
The results of 
of the evaluation indicate that monitors based on  
our framework provide significant gains in runtime performance in comparison to traditional 
monitors implemented using Java's reentrant locks. 

\remove{
\hfill\\
\hfill\\
\hfill\\
\hfill\\
\hfill\\
\hfill\\
\\
%\end{abstract}
\vspace{50pt}
Regular paper (can be considered for brief announcement)\\
\vspace{30pt}
Eligible for the best student paper award\thispagestyle{empty}

\pagebreak
\clearpage
\setcounter{page}{1}
}
\section{Introduction} \label{sec:intro}
%The goal is to introduce concurrency by using asynchronous monitor method 
%invocation. We using thread pool to avoid creating thread per monitor. We 
%consider different levels of consistency. We use the scheduler 
%(AutoSynch \cite{hg13}) for handling requests. AutoSynch \cite{hg13} does not 
%discuss about fairness but focus only on performance. Reduce contention of 
%locks.
Monitors are the prevalent programming technique for 
synchronization between threads in shared-memory parallel programs. 
They were designed \cite{hoa74} with two primary 
goals: $(i)$ to ensure safety of shared data by enforcing 
mutual exclusion in critical sections, and $(ii)$ to enable conditional synchronization using the {\em wait-notify} mechanism. 
%Generally, for multi-threaded programs, the instructions that 
%update the shared data are collectively called {\em critical sections}. 
%Monitors enforce safety by allowing one-thread-at-a-time access to critical sections.
Monitors ensure that while a thread is executing a critical
section, other threads that
cannot access the critical section {\em wait} for it to become available. 
%rather than
%spinning -- continuously checking the availability of 
%the critical section -- and hogging the CPU. 
Waiting threads are notified by some thread exiting the critical section.
%Hence, the combination of these techniques provides an elegant solution to the design goals.
We argue that this monitor design has certain drawbacks:
%we believe that when 
%used in 
%conditional synchronization based parallel programs on 
%for multi-core machines, 
%under high contention scenarios this mechanism 
%reduces the throughput of the parallel programs
% by blocking out all the other threads, except one, 
\remove{from performing updates at any instance. For example, consider the fundamental problem of a 
bounded-buffer shared between multiple producer and consumer threads. Fig.~\ref{fig:bb_exp}(a)
shows a Java based implementation in which locks\cite{lea05} are used to implement the 
prevalent monitor design for this problem. 
The {\sf \small put} and {\sf \small take} methods form two critical sections in this 
program. 
We argue that this implementation does not provide good throughput due to some
additional drawbacks of the current design of monitors: }
\begin{enumerate}[($a$)] \itemsep0em
\item For executions on multi-core CPUs, a thread's update to the shared data may cause a cache invalidation for some other thread
running on a different core, specially if the thread was {\em waiting} for the critical 
section. Thus, the penalty incurred due to cache misses would slow down the program on multi-core CPUs. 
\item If threads do not require the result of their updates on shared data immediately after performing these updates,  
then the blocking executions of critical sections for these updates
may become a bottleneck, and could reduce the program throughput. For example, consider a thread 
that inserts items in a linked-list. If this thread does not need the result of the insert operation
then blocking insert operations lead to slower runtimes.  
\remove{
If the items are actually being generated by some other source, and 
the producer thread is merely responsible for receiving them and inserting them
into the buffer, then 
this implementation does not provide good throughput under the heavy load 
of item generation. This is because the {\sf \small put} method becomes a bottleneck
due to its blocking execution. With the emerging trends 
of big data and sensor data analytics, it is important to consider such a use case. }
\end{enumerate}
\remove{ A common solution to tackle the throughput issue is to use reader/writer (R/W) locks; but they can only be used in
use-cases where read only monitor operations are present and also frequent. Even with R/W locks, it is 
possible that executions of a particular monitor write method may become the bottleneck, and thus result in reduced throughput. }
%In addition, it can also be argued that the current monitor design deviates from the object-
%oriented programming paradigm of data encapsulation. 

In this paper, we focus 
on monitors methods that perform updates on shared data. Thus,
the solution of using Read/Write locks \cite{lea05} 
is not applicable. 
Observe that the design of monitors to enforce blocking executions 
was envisioned in $1970$'s  
when processors were single-core, processing 
power was scarce and saving processor
cycles was a primary concern for programmers. Under these settings, making the 
critical section executions blocking was not a performance issue. The current situation is drastically different:  
not only multi-core processors are now ubiquitous,  
but they are  
%power, ie. average CPU speed per unit price, 
%as well as memory both are
also significantly cheaper and faster. Therefore, it is important to explore
alternate design and implementation of monitors to exploit the availability of multi-core processors. Given 
that the blocking executions of monitor methods seem to form a bottleneck for performance, we 
pose the following question: can monitor design and implementation be altered
to allow non-blocking executions and thus improve the throughput of conditionally 
synchronized multi-threaded programs on multi-core machines? 
%This design was particularly suited 
%to a single core architecture, and having just one core for execution of instructions 
%also limited 
%the possibility of performance gains with different implementation techniques.
%are also significantly faster in comparison.  
%With features such as hyperthreading, superscalar execution, and SIMD instructions,
%, and in future they are only going to get cheaper. Both of these resources, processors 
%especially,
%Modern multi-core processors also facilitate execution of 
% many concurrent threads and operations without wastage of processing power. 
 The motivation here is that non-blocking executions can increase parallelism by 
allowing more instructions to be executed in fixed 
time from the perspective of the thread(s). 
%Scientific computations requiring 
%real-time analysis of data that is being generated at a
%fast rate, and other applications 
For this purpose, we propose  
two novel techniques:\begin{enumerate}\itemsep0em
\item A monitor object is instantiated as a thread, and invocation of monitor methods 
by application (worker) threads
is replaced by submissions of equivalent {\em tasks} to the monitor thread. The monitor 
thread assumes the responsibility of executing these tasks and returning the result 
(if required) to the application threads. %This approach deals with the issue of cache  
%performance discussed in ($a$) above.   
\item The monitor methods, replaced by {\em tasks} in representation, can be executed in non-blocking manner. 
The non-blocking executions are %selective and 
explicitly indicated in the code. %This feature tackles the throughput issue discussed in $(b)$ above.  
\end{enumerate}

These two changes together tackle the drawbacks discussed above. Under our proposed design, the worker threads 
do not directly access the shared data, only monitor thread does, which helps in improving the data cache locality; 
and 
non-blocking executions reduce runtime due to parallelism. 
%provides better cache locality because worker threads only monitor threads access the shared data, and other threads do not directly 
This paper presents our experience with design and implementation of these ideas. We propose a framework, called  {\em ActiveMonitor}, 
that explores non-blocking executions of monitor methods 
 --- possibly 
at the expense of cheap and abundant processor 
and memory resources.
%We propose a framework called {\em ActiveMonitor} that enables non-blocking executions 
%of monitor methods using these two techniques. 
\begin{figure*}[ht!]
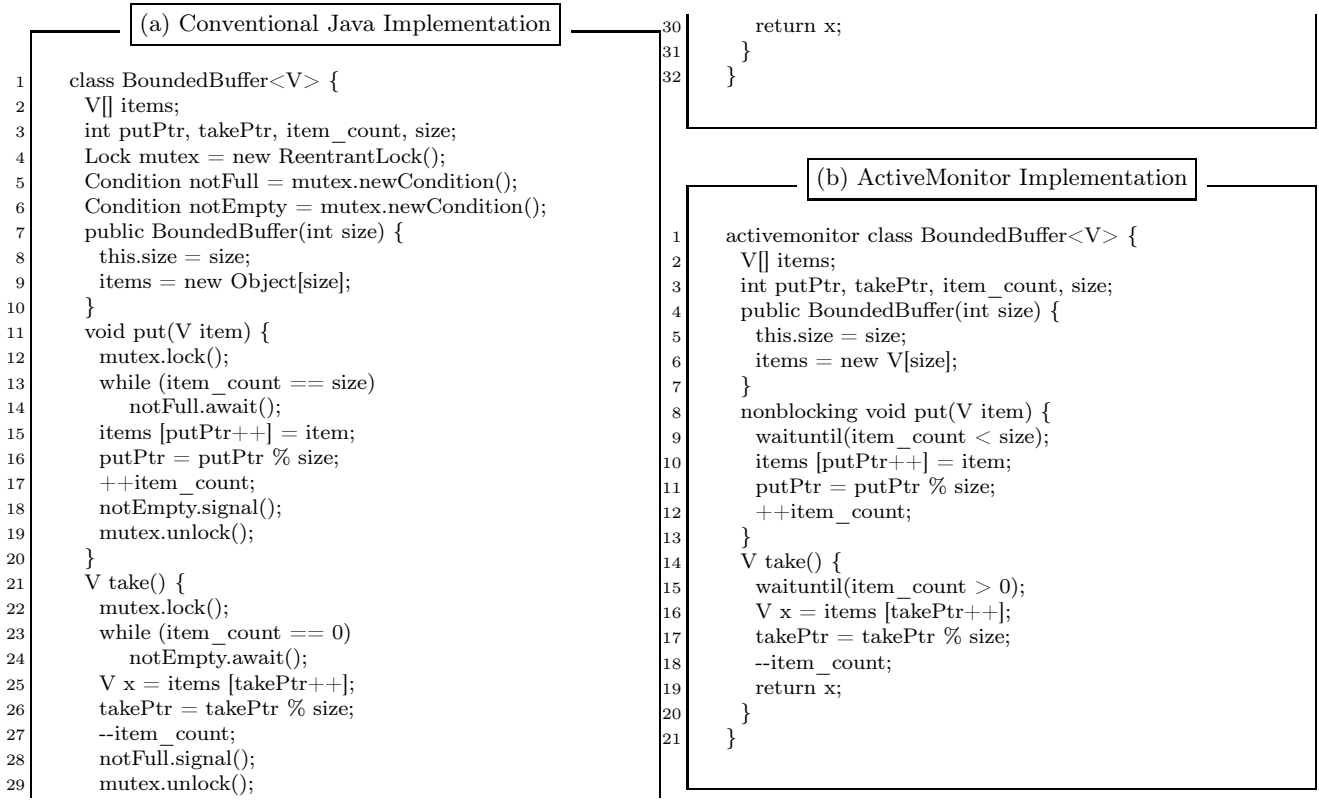

\begin{multicols}{2}
    \begin{Verbatim}[fontfamily=serif,fontsize=\scriptsize,gobble=8,frame=single,
            framesep=5mm,numbers=left,numbersep=2pt,
            label=\fbox{\footnotesize \emph{(a) Conventional Java Implementation}}]
        class BoundedBuffer<V> {
          V[] items;  
          int putPtr, takePtr, item_count, size;
          Lock mutex = new ReentrantLock();
          Condition notFull = mutex.newCondition();
          Condition notEmpty = mutex.newCondition();
          public BoundedBuffer(int size) {
            this.size = size;
            items = new Object[size];
          }
          void put(V item) {
            mutex.lock();
            while (item_count == size)
                notFull.await();
            items [putPtr++] = item;
            putPtr = putPtr % size;
            ++item_count;
            notEmpty.signal();
            mutex.unlock();
          }
          V take() {
            mutex.lock();
            while (item_count == 0) 
                notEmpty.await();
            V x = items [takePtr++];
            takePtr = takePtr % size;
            --item_count;
            notFull.signal();
            mutex.unlock();
            return x;
          }
        }
    \end{Verbatim}
    \begin{Verbatim}[fontfamily=serif,fontsize=\scriptsize,gobble=8,frame=single,framesep=5mm,
            numbers=left,numbersep=2pt,
            label=\fbox{\footnotesize \emph{(b) ActiveMonitor Implementation}}]
        activemonitor class BoundedBuffer<V> { 
          V[] items; 
          int putPtr, takePtr, item_count, size;
          public BoundedBuffer(int size) {
            this.size = size;
            items = new V[size];
          }
          nonblocking void put(V item) { 
            waituntil(item_count < size); 
            items [putPtr++] = item; 
            putPtr = putPtr % size;
            ++item_count; 
          } 
          V take() { 
            waituntil(item_count > 0); 
            V x = items [takePtr++]; 
            takePtr = takePtr % size;
            --item_count;
            return x;
          }
        }
    \end{Verbatim}
\end{multicols}
  \caption{Comparsion of Bounded-Buffer programs written using (a) Java's locks and (b) our approach}
  \label{fig:bb_exp}
\end{figure*}

Fig.~\ref{fig:bb_exp} compares the implementation of a bounded-buffer monitor written using: (a) Java's Reentrant locks
\cite{lea05}, and
(b) our proposed framework and its constructs. 
%of a bounded-buffer written in Java 
%using the 
%{\em ActiveMonitor} constructs.
We propose two new keywords: {\sf \small activemonitor} and {\sf \small nonblocking}. 
\remove{See \cite{hg13} for details. }
The {\sffamily \small activemonitor}  keyword, in class declaration at line~$1$ of Fig.~\ref{fig:bb_exp}(b), 
 is used to declare that this bounded-buffer is implemented  
as an `active' monitor object, that is: it would exist 
 as an independent thread. 
% If the {\sffamily \small activemonitor} keyword
%is removed, the object becomes a regular monitor object of Java and can not use 
%the {\sffamily \small nonblocking} keyword. 
The {\sffamily \small nonblocking} keyword in the method
signature of {\sf \small put()}, at line~$6$, indicates  
that the execution of this method can be non-blocking. 
As our experiments show (Section~\ref{sec:eval}), 
this implementation provides $\sim33$\% faster runtime 
in comparison to the conventional implementation of
Fig.~\ref{fig:bb_exp}(a). 
Lastly, our implementation is
shorter, 
cleaner, and more intuitive. 
This is because we use the {\sffamily \small waituntil}
construct (at lines $9$,$15$ in Fig.~\ref{fig:bb_exp}(b)) that provides {\em implicit/automatic signaling} mechanism 
for threads \cite{hg13}. This keyword and the automatic signaling\footnote{Appendix~\ref{app:autosynch} briefly 
covers the automatic signaling mechanism}
is not a contribution of this paper; 
we merely use it in our framework's implementation to exploit the performance benefits it provides.   
%We discuss some convincing use-case scenarios 
%in section \ref{sec:background}. 

An immediate criticism of our approach  is the need for additional resources 
for its realization. Creating a new thread for a monitor object
consumes additional memory in the form of its stack allocation, as well as increases the demand 
for CPU time slices. However, it is well known 
that  
the processing power of the modern multi-core CPUs remains commonly under utilized\footnote{we collected
resource usage data, at $10$ min. intervals, of Stampede \cite{stampede} supercomputer over a period of three days. The mean  
CPU idle time was $86\%$.}. 
Additionally, with the trends of ever increasing data and computation size, most users tend to favor faster runtimes even at the cost
of somewhat increased consumption of processor and memory
resources. Our approach provides much faster runtimes
for many multi-threaded use cases. In addition, 
we claim that with careful resource management by 
the means of a 
dynamic thread creation policy that considers the processor/memory utilization, coupled with 
the automatic thread signaling mechanism \cite{hg13},
our approach's performance improvements 
come at an acceptable cost of additional CPU and memory usage. 
%Given availability seems plentiful now, we believe it is worthwhile to explore this 
%design and re-evaluate its benefits and drawbacks. 
%We back this claim by the 
%results and analysis of our  
%experimental evaluation (Sections~\ref{sec:eval},\ref{sec:discussion}). 
% removed writeups
\remove{
In addition to the performance benefit,  
goal of our research is also to make
multi-threaded programming 
easier for the end user - while ensuring
that the underlying framework provides good runtime
performance. Using the approach of implicit thread 
notification of \cite{hg13}, we significantly 
reduce the program 
size for thread synchronization mechanism. By building
the thread-pool based monitor on top of implicit 
signaling, we are able to exploit the performance 
advantage as well explore other  
approaches such as out of order execution etc. }
In short, the key contributions of this paper are following:
\begin{itemize} \itemsep0em 
\item we propose a new monitor design, based on 
two new keywords: {\sf \small activemonitor} and {\sf 
\small
nonblocking}, that 
facilitates non-blocking executions of data updates 
within their critical 
sections. We describe the concepts required for this design's implementation in Java. %protected that performs the data updates
%in non-blocking manner by default. 
\item we define rules to ensure correctness in presence of non-blocking operations, and show that with
our proposed design, despite additional concurrent executions, the executions 
are linearizable \cite{hw90} as per their non-blocking interpretation.   

\item we evaluate the runtimes
of some fundamental multi-threaded problems using our prototype implementation
and Java's Reentrant locks \cite{lea05}. For majority of these problems, our design and implementation provides $\sim30-40$\% faster runtimes 
in comparison to reentrant locks.
\end{itemize}

The rest of this paper is organized as follows. 
Section~\ref{sec:background} briefly covers the background 
concepts. Section~\ref{sec:concept} 
discusses the key concepts  of the framework
%(implementation details are summarized in Appendix~\ref{app:impl})
and correctness of executions. Section~\ref{sec:eval} presents experimental evaluation of our approach.
We discuss some limitations of our approach, and future work in Section~\ref{sec:discussion}. Section~\ref{sec:related} presents 
the related work, and Section~\ref{sec:conclusion}
offers concluding remarks. 

\remove{
 Our framework makes it easier to write 
 multi-threaded programs that use conditional 
 synchronization through monitors. Additionally, 
 the performance of such programs 
can be improved
with nominal effort under this framework.
Our proposed design of monitors shifts from that 
of conventional ones
by allowing non-blocking executions of critical 
sections they protect. For this purpose, a prominent
and possibly controversial, change explored by us 
is creating an
 additional thread for a monitor object. 
We couple this approach with 
the implicit thread notification approach 
proposed by Hung et al. in \cite{hg13}.
Under our approach, worker (application)
threads that need to manipulate the shared data 
issue manipulation requests, called {\em tasks}, to the monitor threads, and the monitor 
thread assumes the responsibility 
of finishing these tasks on their behalf.
In addition, the worker threads indicate
whether these
tasks have to executed in non-blocking
or blocking manner.  
By default, unless specifically instructed as non-blocking, the tasks are treated as blocking.  
For example:  
consider the fundamental problem of a bounded-buffer shared between producer and consumer
threads. Fig.~ 
\ref{fig:bb_exp} shows our implementation
of a bounded-buffer written in Java 
using the 
{\em ActiveMonitor} constructs,
vis-a-vis a conventional Java program using locks. We would like to highlight the following points:
\begin{enumerate}[(a)]

\end{enumerate}
}
 \remove{
When more and more transistors could be packed on 
processor chips, the architecture community explored 
various techniques --- register blocking, pre-fetching, etc. ---  to use the surplus registers 
available in order to increase the processor throughput. 
Based on ideas inspired by those techniques, }
\remove{
We explore the idea of a monitor execution framework that 
allows non-blocking execution of critical sections. ,    
Our approach is similar to the 
the features that were envisioned by the computer
architecture community, such as using additional 
registers for faster storage, when the hardware became 
cheap and more and more transistors could be placed 
on the CPU chips. }
\remove{
Using the JavaCC pre-processor \cite{kod04}, 
we convert the keyword usage to equivalent Java code that allows non-blocking executions  
on the monitor object. 

The intuition behind our approach is as follows.  
Consider the {\sf \small put} operation 
of a bounded-buffer. Firstly, our approach increases the concurrency
of the program by using non-blocking operations. Generally, the goal of  
a producer thread's invocation of this method is  
to insert an item into the buffer, without requiring a return value,  
unless an exception occurs. Does the producer necessarily
 need to wait for the insert
to be complete? Could we not make it more 
productive by allowing an asynchronous insert, where 
the thread just submits an insert task and 
returns to its main execution 
-- possibly to generate/fetch more items for insertion. Using the
task submission based approach for monitors allows
non-blocking (asynchronous) inserts. 
A producer can submit insert tasks under the implicit guarantee 
that the monitor would enforce all the required constraints for the task completion.
With this guarantee, after submitting the tasks to the monitor, these threads and continue 
to execute independently and perform other computations 
that are not dependent 
on the task submitted to the monitor. By allowing 
non-blocking inserts, we increase the concurrency 
of the executions, subsequently increasing the overall 
throughput of the program from an external  
point of view. 

Secondly, the data locality of the program could be 
improved by introduction of monitor threads. In almost 
all the 
cases, the shared data is encapsulated within 
the monitor object. When multiple threads manipulate 
this shared data it is likely that the context switches between threads
lead to cache invalidation/misses, which in turn cause the execution to slow down.   
In comparison, a monitor running as a separate thread 
performs the updates to its own data, which 
should intuitively lead to a significantly improved data locality in terms of processor cache.  
Revisiting the bounded-buffer use case, a thread wanting to 
insert an item in the buffer would incur a big drop in 
performance if the buffer has a lot of items in it and it is not 
already present in thread's cache. In addition, there 
is an additional benefit of this approach 
in the form of strong encapsulation of data, 
and better separation of concerns. 
}
\remove{
 There is an auxiliary benefit to this approach 
in terms of programming - the monitors and the threads operating on the monitors can be programmed
separately without knowing the inner implementation details. We understand that 
some efforts in this direction, especially \texttt{
Futures} in Java, have already been made; we discuss them in detail in section 
\ref{sec:related}. \\

{\bf Data encapsulation}: The conventional monitor design requires the executing thread to access monitor 
data, thus departing from the object-oriented design of data encapsulation and separation 
of concerns.  Consider the bounded-buffer case - a thread 
wanting to perform put an item in the buffer should ideally 
not need to access the internal variables of the buffer. 
From a functional point of view, how the buffer is structured is not a concern for a 
thread that is generating items to be stored in it. 
Having the monitor object as a thread on its own enables better data encapsulation
and ensures that  only the monitor object is responsible for manipulating the data it protects - 
an approach that adheres to separation of concerns philosophy. Under our approach, the thread that needs to insert 
an item in the bounded-buffer would submit an insert {\em task} 
to the monitor protecting the buffer under the assumption (which is guaranteed to 
be satisfied) that the item would be inserted. \\
}

\remove{
Multi-core devices are now widely available. However, the computing power of 
such devices has still not yet been fully utilized. First of all, for 
multi-core programming, programmers often use monitors to maintain the mutual 
exclusion of shared objects; however, using monitors may limit parallelism
since all of its methods are blocking. When multiple threads invoke methods  
of a monitor simultaneously, only one thread is allowed to execute any
method of the monitor. All the other threads are forced to wait, thereby
wasting the computing power of a multi-core device. Second, sequential
programming, which is unable to gain a benefit from multi-core devices, remains
the mainstream since multi-core programming is still a challenging task due to
bugs resulting from concurrency and synchronization. Although widely
acknowledged of difficulties in programming these systems, it is surprising
that the most common methods for dealing with synchronization are based on
ideas that were developed in the early '70s. For example, the most widely used
threads package in C++ \cite{str00}, pthreads \cite{but97}, and the most widely
used threads package in Java \cite{gjs+13}, java.util.concurrent \cite{lea05},
are based on the notion of monitors \cite{han75, hoa74}(or semaphores
\cite{hoa74, dij68}).  In this paper, we propose a novel nonblocking active 
automatic signaling monitor that increases programmer productivity as well as 
gains in system performance.

The concept of automatic signaling was initially explored by Hoare in
\cite{hoa74}, but rejected in favor of condition variables based on efficiency
considerations.  The belief that automatic signaling is extremely inefficient
compared to explicit signaling has been widely held since then, so the
prevalent programming languages based on monitors use explicit signaling. For
example, Buhr, Fortier, and Coffin \cite{bfc95} claim that automatic monitors
are 10 to 50 times slower than explicit signals. Although Hung and Garg
\cite{hg13} have shown that the automatic monitor can be as efficient as the
explicit-signal monitor and even more efficient in some cases, but they did not
explore the possibility of further parallelism or consider fairness and
priority thread signaling.

Both pthreads and Java require programmers to explicitly acquire lock and
signal threads that may be waiting for certain conditions. Programmers have to
explicitly declare lock and condition variables and then signal one or all of
the threads when the associated condition becomes true. Using the wrong waiting
notification (signal versus signalAll or notify versus notifyAll) is a frequent
source of bugs in Java multi-threaded programs. Furthermore, a thread is
required to wait when another thread has already acquired the lock or when the
condition is not true. This mechanism could result in a waste of the computing
power of multi-core devices since some computing units are idle. 

In our proposed approach, explicit lock and condition variables are unnecessary
and our system is responsible for maintaining mutual exclusion and
synchronization. This feature significantly reduces program size and
complexity. Furthermore, our system provides three task execution polices,
safe, fairness, and priority, so that the programmers can have more flexibility 
in developing their multi-core programs. In addition, for better utilization of 
the multi-core processor, programmers can specify a monitor method as 
nonblocking to indicate that its invoker does not need to wait for the 
completion of the method. For blocking method, {\em Namor} provides two
programming paradigms, greedy and prompt, for better parallelism and
flexibility. 

%\begin{figure*}[ht!]
%\begin{multicols}{2}
%    \begin{Verbatim}[fontsize=\scriptsize,gobble=8,frame=topline,
%            framesep=5mm,numbers=left,numbersep=2pt,
%            label=\fbox{\small\emph{Explicit-Signal}}]
%        class BoundedBuffer {
%          Object[] buff;  
%          int putPtr, takePtr, count;
%          Lock mutex = new ReentrantLock();
%          Condition insufficientSpace = mutex.newCondition();
%          Condition insufficientItem = mutex.newCondition();
%          public BoundedBuffer(int n) {
%            buff = new Object[n];
%            putPtr = takePtr = count = 0;
%          }
%          public void put(Object[] items) {
%            mutex.lock();
%            while (items.length + count > buff.length) {
%              insufficientSpace.await();
%            }
%            for (int i = 0; i < items.length; i++) {
%              buff[putPtr++] = items[i];
%              putPtr %= buff.length;
%            }
%            count += items.length;
%            insufficientItem.signalAll();
%            mutex.unlock();
%          }
%          public Object[] take(int num) {
%            mutex.lock();
%            while (count < num) {
%              insufficientItem.await();
%            }
%            Object[] ret = new Object[num];
%            for (int i = 0; i < num; i++) {
%              ret[i] = buff[takePtr++];
%              takePtr %= buff.length;
%            }
%            count -= num;
%            insufficientSpace.signalAll();
%            mutex.unlock();
%            return ret;
%          }
%        }
%    \end{Verbatim} 
%    \begin{Verbatim}[fontsize=\scriptsize,gobble=8,frame=topline,framesep=5mm,
%            numbers=left,numbersep=2pt,
%            label=\fbox{\small\emph{Automatic-Signal}}]
%        Namor class BoundedBuffer { 
%          Object[] buff; 
%          int putPtr, takePtr, count; 
%          public BoundedBuffer(int n) {
%            buff = new Object[n];
%            putPtr = takePtr = count = 0;
%          }
%          public void nonblocking put(Object[] items) { 
%            waituntil(count + items.length <= buff.length); 
%            for (int i = 0; i < items.length; i++) {
%              buff[putPtr++] = items[i];
%              putPtr %= buff.length;
%            }
%            count += items.length; 
%          } 
%          public Object[] take(int num) { 
%            waituntil(count >= num);
%            Object[] ret = new Object[num];
%            for (int i = 0; i < num; i++) {
%              ret[i] = buff[takePtr++]; 
%              takePtr %= buff.length; 
%            }
%            count -= num;
%            return ret;
%          }
%        }
%    \end{Verbatim}
%\end{multicols}
%    \caption{The parameterized bounded-buffer example}
%    \label{fig:bb_ex}
%\end{figure*}

\begin{figure*}[ht!]
\begin{multicols}{2}
    \begin{Verbatim}[fontsize=\scriptsize,gobble=8,frame=topline,
            framesep=5mm,numbers=left,numbersep=2pt,
            label=\fbox{\small\emph{Explicit-Signal}}]
 
        public class ReadersWritersMonitor {
          final ReentrantLock mutex = new ReentrantLock();
          final HashMap<Integer, Condition> mapCond
              = new HashMap<Integer, Condition>();
          int ticket, serving, rcnt;
          ReadersWritersMonitor() {
            ticket = serving = rcnt = 0;
          }
          public void startRead() {
            mutex.lock();
            int myTicket = ticket;
            ticket++;
            Condition cond = mutex.newCondition();
            mapCond.put(myTicket, cond);
            while (myTicket != serving) {
              if (mapCond.containsKey(serving)) {
                mapCond.get(serving).signal();
              }
              cond.await(); 
            }
            mapCond.remove(myTicket);
            rcnt++;
            serving++;
            if (mapCond.containsKey(serving)) {
              mapCond.get(serving).signal();
            }
            mutex.unlock();
          }
          public void endRead() {
            mutex.lock();
            rcnt--;
            if (mapCond.containsKey(serving)) {
              mapCond.get(serving).signal();
            }
            mutex.unlock();
          }
          public void startWrite() {
            mutex.lock();
            int myTicket = ticket;
            ticket++;
            Condition cond = mutex.newCondition();
            mapCond.put(myTicket, cond);
            while (myTicket != serving || rcnt != 0) {
              if (myTicket != serving && mapCond.containsKey(serving)) {
                mapCond.get(serving).signal();
              }
              cond.await(); 
            }
            mapCond.remove(myTicket);
            mutex.unlock();
          }
          public void endWrite() {
            mutex.lock();
            serving++;
            if (mapCond.containsKey(serving)) {
              mapCond.get(serving).signal();
            }
            mutex.unlock();
          }
        }
    \end{Verbatim} 
    \begin{Verbatim}[fontsize=\scriptsize,gobble=8,frame=topline,framesep=5mm,
            numbers=left,numbersep=2pt,
            label=\fbox{\small\emph{Automatic-Signal}}]
        public Namor class ReadersWritersMonitor {
          int rcnt, tickets, serving;
          public ReadersWritersMonitor() {
            rcnt = tickets = serving = 0;
          }
          public void startRead() {
            int myTicket = tickets;
            tickets++;
            waituntil(myTicket == serving);
            rcnt++;
            serving++;
          }
          public nonblocking void endRead() {
            rcnt--;
          }
          public void startWrite() {
            int myTicket = tickets;
            tickets++;
            waituntil(myTicket == serving && rcnt == 0);
          }
          public nonblocking void endWrite() {
            serving++;
          }
        }
    \end{Verbatim}
\end{multicols}
    \caption{The ticket readers/writers example}
    \label{fig:trw_ex}
\end{figure*}

%\begin{figure*}[ht!]
%\begin{multicols}{2}
    %\begin{Verbatim}[fontsize=\scriptsize,gobble=8,frame=topline,
            %framesep=5mm,numbers=left,numbersep=2pt,
            %label=\fbox{\small\emph{Explicit-Signal}}]
        %class BoundedBuffer {
          %Object[] items;  
          %int putPtr, takePtr, count;
          %Lock mutex = new ReentrantLock();
          %Condition notFull = mutex.newCondition();
          %Condition notEmpty = mutex.newCondition();
          %public BoundedBuffer(int num) {
            %items = new Object[num];
            %putPtr = takePtr = count = 0;
          %}
          %public void put(Object item) {
            %mutex.lock();
            %while (count == items.length) {
              %notFull.await();
            %}
            %items[putPtr++] = item;
            %if (putPtr == items.length) {
              %putPtr = 0;
            %}
            %++count;
            %notEmpty.signal();
            %mutex.unlock();
          %}
          %public Object take() {
            %mutex.lock();
            %while (count == 0) {
              %notEmpty.await();
            %}
            %int item = items[takePtr++];
            %if (takePtr == items.length) {
              %takePtr = 0;
            %}
            %--count;
            %notFull.signal();
            %mutex.unlock();
            %return item;
          %}
        %}
    %\end{Verbatim}
    %\begin{Verbatim}[fontsize=\scriptsize,gobble=8,frame=topline,framesep=5mm,
            %numbers=left,numbersep=2pt,
            %label=\fbox{\small\emph{Automatic-Signal}}]
        %monitor class BoundedBuffer { 
          %Object[] items; 
          %int putPtr, takePtr, count; 
          %public BoundedBuffer(int num) {
            %items = new Object[num];
            %putPtr = takePtr = count = 0;
          %}
          %public nonblocking void put(Object item) { 
            %waituntil(count < items.length); 
            %items[putPtr++] = item; 
            %if (putPtr == items.length) { 
              %putPtr = 0; 
            %} 
            %++count; 
          %} 
          %public Object take() { 
            %waituntil(count > 0); 
            %Object item = items[takePtr++]; 
            %if (takePtr == items.length) { 
              %takePtr = 0; 
            %}
            %--count;
            %return item;
          %}
        %}
    %\end{Verbatim}
%\end{multicols}
  %\caption{Bounded-buffer example}
  %\label{fig:bb_ex}
%\end{figure*}

Fig. \ref{fig:trw_ex} shows the difference between the Java and our nonblocking 
active automatic signaling implementation for the ticket readers/writers
problem \cite{han72}, a fairness readers/writers monitor \cite{chp71}.
In this approach, ticket is used for maintaining temporal order among monitor
method invocations. Every reader and writer takes a ticket, which defines 
execution order, when the $startRead$ and $startWrite$ are invoked. In
$startRead$, a reader waits until its turn, and finally increases $serving$, 
which enable multiple readers to use resources. In $startWrite$, a writer
waits until when it is its turn and when no reader is using resources. Writers
increases $serving$ in $endWrite$ for preventing other readers and writers from 
progressing.  

The explicit-signal ticket readers/writes is written in Java.
Programmers need to explicitly associate conditional predicates with condition
variables and call a signal or an await statement manually. In this example,
programmers even need a hashmap for caching the condition variables. Although
the  implementation can be further optimized, this example focuses on the 
concept of explicit-signaling. Notice
that the unlock statement must be done in a finally block; try and catch blocks
are also needed for the InterruptedException that may be thrown by await.
However, for simplicity, we avoid the exception handling in
Fig.~\ref{fig:trw_ex}. 

The automatic-signal is written using our framework. We use the
{\em Namor} modifier to indicate that the class is a monitor as in line 1. An
{\em Namor} class provides mutually exclusive access to its member functions.
For conditional synchronization, we use $waituntil$ as in line 9. There are no
signal or signalAll calls in the {\em Namor} program. Notice that the waituntil
statement can take any Boolean condition just like the if and while statements.
Clearly, the automatic-signal monitor is much simpler than the explicit-signal
monitor. Furthermore, a programmer can specify a method as nonblocking to
indicate that a thread does not need to wait for the completion of the method  
invocation as in line 13. The $endRead$ is nonblocking since the original 
thread does not need to wait for it.

\begin{figure}[ht!]
  \centering
  \includegraphics[width=70mm]{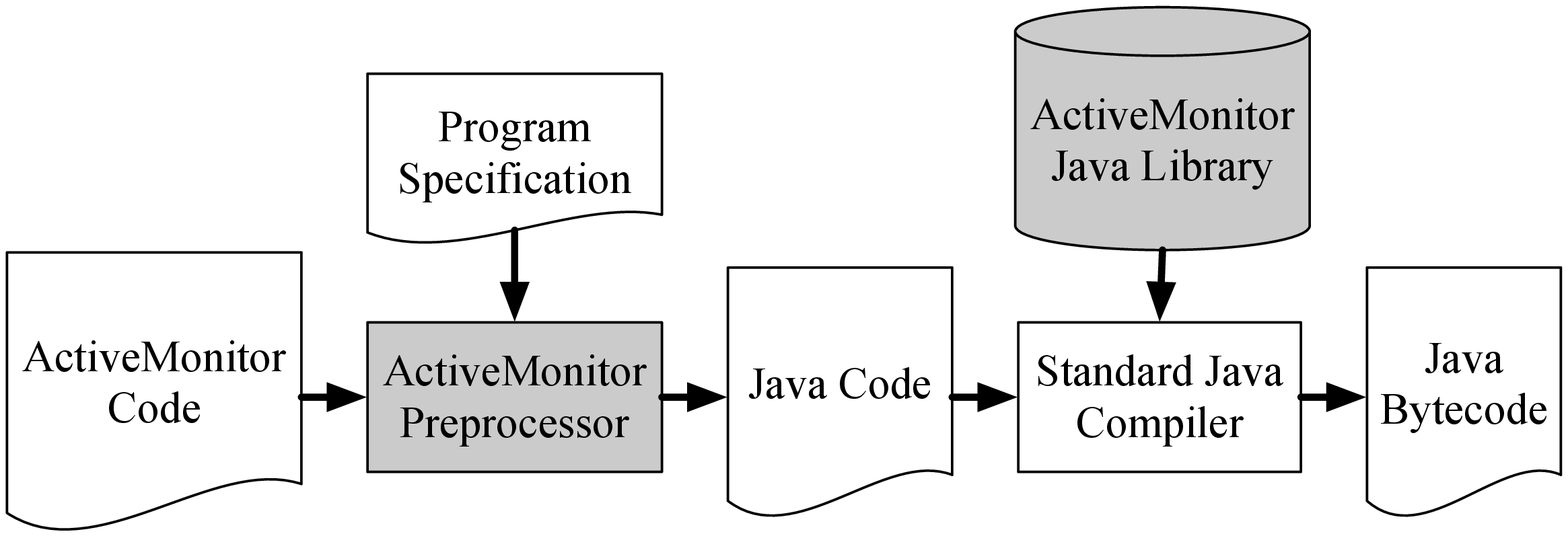}
  \caption{The {\em  Namor} framework}
  \label{fig:fw}
\end{figure}

The framework for our implementation is shown in Fig. 2. It is composed of a
preprocessor and a Java thread pool manager library. The preprocessor takes a
consistency specification from programmers and translates {\em Namor} code
into traditional Java code. Our mechanism and developed techniques were
implemented in the Java thread pool library, which is responsible for
converting a monitor function into monitor tasks and executing those tasks.

There are three important novel concepts in {\em Namor} that increase
parallelism as well as flexibility -- monitor task, nonblocking method and 
blocking method programing paradigms, and execution order policies.

The idea of {\em monitor task} is used to enable {\em Namor} to execute monitor 
method execution invocation by submitting monitor tasks to our monitor thread 
pool.

The idea of {\em nonblocking and blocking method programing paradigms} are used
to increase parallelism. 

The idea of {\em execution order policies} are used to increase flexibility.

exp results

Although this paper 
focuses on Java, our techniques are also applicable to other programming 
languages and models, such as pthread and C\# \cite{hwg03}.

This paper is organized as follows. Section \ref{sec:intro} describes the
background of the monitor and the thread pool pattern. The concepts of this
study are presented in Section \ref{sec:concept} and the practical implementation details are
discussed in Section \ref{sec:imp}. The proposed methods are then evaluated with
experiments in Section \ref{sec:eval}. Section \ref{sec:conclusion} gives the concluding remarks.
}

\section{Background}
\label{sec:background}
We use the concept of {\em futures} \cite{future} to achieve non-blocking executions of critical sections.  \\
%informally define non-blocking executions, and use of {\em future} objects  
%key 
%to our design and implementation.   \\
\hfill\\
{\bf Futures:} 
Futures \cite{future} were proposed as a 
mechanism to facilitate task submissions (to threads) in shared memory parallel programs. 
In a task submission based model some threads, called {\em executors}, provide task execution as a service. Under this model, a thread makes an asynchronous
call to some executor thread, and a {\em future} object is returned to the caller thread as a pointer to the computation and its possible result. 
The executor thread performs a task for the caller thread, and updates the returned future object with the result of the 
computation of the task. 
\remove{A simple example 
of usage of a {\em future} object is the following:
%\begin{minted}{java}
\begin{Verbatim}[fontfamily=serif,frame=single]
Future<V> f = invoke_method(args); 
// execute some other instructions
...
//evaluate the future's result
V result = f.get();
\end{Verbatim}
%\end{minted}
}
The caller thread can collect the result  
 by {\em evaluating} the future. Evaluation of a future,
i.e. the collection of the result, is performed by invoking a
blocking method on the future object. Hence, if needed the thread would have 
to wait for the task computation to complete. In our framework, we use futures to realize non-blocking 
executions of monitor methods by treating method invocations as task submissions.\\
\hfill\\
{\bf Blocking and Non-Blocking methods}: 
By design, methods provided by monitor objects are usually blocking ({\em synchronous}). If a thread invokes a blocking 
method $m$, then the thread can not execute 
its next instruction until it has exited
$m$; whether by normally returning from the method
or exiting the method due to an exception. Whereas,    
if a thread invokes a 
non-blocking method $m'$ then it would be able to execute its next instruction
immediately after, while the non-blocking method is being 
executed in parallel, possibly by another thread.

\section{{\em ActiveMonitor} Concepts} \label{sec:concept}

In this section, we discuss the key concepts of the {\em ActiveMonitor} framework. The relevant implementation details are presented in Appendix~\ref{app:impl}. 
We first explain our approach using the bounded-buffer monitor of Fig.~\ref{fig:bb_exp}(b). Observe that 
apart from using the implicit-signal construct of {\sf \small waituntil}, the code uses two additional
keywords that we propose: {\sf \small activemonitor} and {\sf \small nonblocking}. 
By including the    
\actm keyword in the declaration (eg. line $1$), the programmer indicates that this monitor is an `active' object, i.e. 
it should be instantiated as a thread. However, if there are many monitor objects in a user's program  
%However, this keyword is a hint to the {\em ActiveMonitor} framework, and not 
%a guarantee that every monitor object would exist as a thread.
%This is because we do not want to create too many 
we limit the number of 
monitor threads --- this is an implementation detail, and discussed in Appendix~\ref{sec:threadpool}. In the conventional 
monitor design, worker threads, i.e. threads explicitly created by the user program, execute the instructions of the critical section and update the data protected by the
monitor.  In contrast, as per our design, 
once the monitor is instantiated as a thread, it executes its own critical sections on behalf of worker threads. 
For this, at compile time method calls made to the monitor are treated as task submissions to the monitor thread.
\remove{In our approach,
only the monitor thread executes its own critical sections. Thus, the monitor 
thread 
 assumes the responsibility of finishing these tasks for the threads that submitted them.} Thus, 
 for the program in Fig.~\ref{fig:bb_exp}b, 
 a producer thread's call 
to the {\sf \small put} method, as well as a consumer thread's call to the {\sf \small
take} method, are converted to submission of equivalent tasks to the monitor thread by our framework. The worker %to insert items in the buffer to remove items
threads now have the option to perform blocking ({\em synchronous}) or non-blocking ({\em asynchronous}) executions of monitor methods. 
Asynchronous executions are made possible by using our second keyword proposal, {\sf \small nonblocking}, 
%The intuition behind this keyword, as well as its usage, is rather self-explanatory -- 
that allows for non-blocking (asynchronous) 
execution of instructions of the method using this keyword. We first describe how monitor tasks are generated from monitor methods, and then 
discuss their asynchronous or synchronous executions. 
\remove{For ease 
of explanation, let us assume that there is just one monitor in the system. 
uses two keywords: {\sf \small activemonitor} and {\sf \small nonblocking}, in her 
monitor program. , that the monitor object should be instantiated as a thread. We call these threads monitor threads. In order to limit the overhead of the monitor threads, we use
a dynamic approach} 
%for handling generation of monitor threads, so that when 
%the number of such monitor objects is large (in comparison to the available hardware threads). 

\subsection{Monitor Tasks}

In our framework, every monitor method call corresponds to an equivalent task. At compile-time, invocations of monitor methods by worker 
threads are replaced by submissions of the equivalent tasks to the monitor thread.  
\begin{definition}[Monitor Task] A monitor task $t$ consists of a 
    boolean predicate $P$ and a set of statements $\mathcal{S}$. If the precondition defined by $P$ is true then $t$ is 
    `executable' and
    statements in $\mathcal{S}$ 
    can be executed to complete $t$.  Otherwise, $t$ is `unexecutable'. 
   \end{definition}
\remove{   
\begin{figure}[ht!]
\begin{Verbatim}[fontfamily=serif,frame=single,numbers=left,xleftmargin=5mm]
nonblocking void insert(V item) {
   waituntil (itemCount < buffer_size);
   items [putPtr++] = item;
   putPtr = putPtr % size; 
   ++item_count;
}
\end{Verbatim}
    \caption{Example of a non-blocking task}
    \label{fig:task_ex}
\end{figure}
}

\remove{While generating a monitor task from a monitor method, 
our framework guarantees that the set of statements of the derived task 
includes every statement in the original monitor method. }
Observe that 
the syntax of a monitor task is similar to that 
of a guarded command \cite{dij75}. Consider the {\sf \small put} method 
(lines $8-13$) of the bounded-buffer in Fig.~\ref{fig:bb_exp}b.
For this monitor method, the equivalent monitor task $T$ is defined by the code 
of lines $9-12$. Here, the precondition $P$ is 
{\sf \small (itemCount < buffer\_size)}; and it checks 
if the buffer has any space to insert the item. If this condition is false, the {\sf \small waituntil}
construct ensures that any thread trying to complete this task has to wait unless 
the buffer is not full. Lines $10$ and $11$ together 
form the set of statements $\mathcal{S}$. As indicated by the {\sf \small nonblocking} keyword in the method signature
at line $8$, the 
generated task would be submitted for a non-blocking 
execution to the monitor thread. Removal of this keyword from the method signature makes the execution blocking. 
%To If the programmer wants a blocking execution of the method,  
%all she needs to do is to remove the {\sf \small 
%nonblocking} keyword from its declaration. 
%YYY: need to distinguish the two? 
\remove{
\begin{figure}[ht!]
\begin{multicols}{2}
    \begin{Verbatim}[fontsize=\footnotesize,gobble=6,
            codes={\catcode`$=3\catcode`_=8}]
        void foo() {
            $waituntil(P_1)$;
            $S_1$;
            $waituntil(P_2)$;
            $S_2$;
            ...
            $waituntil(P_n)$;
            $S_n$;
        }
    \end{Verbatim}
    \columnbreak 
    \begin{Verbatim}[fontsize=\footnotesize,gobble=2,
            codes={\catcode`$=3\catcode`_=8}]


        void bar() {
            $S_1$;
            $waituntil(P)$;
            $S_2$;
        }
    \end{Verbatim}
\end{multicols}
    \caption{Examples of two monitor methods.}
    \label{fig:method_ex}
\end{figure}

Intuitively, a $waituntil(P)$ statement corresponds to a precondition $P$
of a task $T$; the statements below the $waituntil(P)$ and above another
$waituntil$ statement constitute the set of statements $S$ of $T$. 
For example, in Fig.~\ref{fig:method_ex}, the first derived task $T_1$ from
the $foo$ method has the precondition $P=P_1$ and the set of statements $S_1$.
}

 The details of compile time generation of tasks for monitor method calls are in Appendix~\ref{sec:app_preproc}. 
 Appendix~\ref{app:task_ex} discusses some special cases of monitor tasks. 
 Our current prototype cannot convert 
some particular types of monitor methods to equivalent tasks -- for example, methods with {\sf \small waituntil} 
not at the beginning of the critical section, or {\sf \small waituntil} 
in a conditional branch. They are also discussed in Appendix~\ref{sec:app_preproc}. 
%({\bf give an example here})
%in Section \ref{sec:discussion}.  
%Monitor tasks can be divided into two types based on the property of their
%corresponding methods. For a blocking method, its results are necessary for the 
%invoking thread to continue other computation; therefore, its corresponding
%task is blocking, which means that the original thread needs to wait for the
%completion of the task. For a nonblocking method, its corresponding task is
%also nonblocking. For example, in Fig.~\ref{fig:bb_ex}, the corresponding task 
%of $put$ method is nonblocking, and the corresponding task of $take$ method is
%blocking. 

%\begin{definition}[Blocking and Nonblocking Task] Consider a task $T$.
%    If a thread needs to wait for $T$ to become complete, then
%    $T$ is a blocking task. Otherwise, $T$ is a nonblocking task.
%\end{definition}

\subsection{Monitor Tasks: Execution and Correctness Guarantees}
\label{sec:threadpool}
In the {\em ActiveMonitor} framework, monitor threads are responsible for executing monitor tasks. 
Based on the availability of resources, the framework tries to create one thread per monitor object.
The implementation details for limiting the monitor thread count are in Appendix~\ref{sec:threadpool}. 
\remove{
To keep the 
overhead of monitor threads relatively low,
we control the number of monitor threads created by querying the operating system 
for available resources and managing the monitor threads dynamically. A component, called {\em Monitor ThreadPool Executor}, 
manages this process. %and uses a threadpool for monitor threads. 
This component  instantiates monitor threads as threads of
a threadpool\footnote{Appendix~\ref{sec:threadpool} presents the concepts of threadpools and how they are used in our framework's implementation} at the start of the execution.  
The pre-processing phase collects the information about number of 
monitor objects in the program, and the threadpool executor uses this information for creating a threadpool balancing
the resource consumption and performance benefit. } 
\remove{Based on the availability of system resources, . 
This is an optimization problem, which we tackle using a small set of simple rules. 
The framework tries to instantiate the most used monitor as a thread whenever possible. It is possible to explicitly 
instruct the framework on how many monitor threads are created. }
%{\bf check this}.
\remove{
However, this  
 is used only if the hardware and operating system indicate availability of resources.  
A threadpool is a collection of threads to execute
 tasks that are units of 
computation. These threads are usually created together at the start-up 
, and 
remain in the pool to provide executor service. For  
task execution based programs, 
the use of a threadpool  
can significantly improve the runtime performance by 
having an already existing thread ready 
to execute the tasks as they arrive.
Generally, tasks are stored in a 
collection, and free threads are
responsible for finishing the unexecuted tasks. 
 If there is no 
task that is eligible for execution, the threads in the pool wait for one to arrive. 
This approach is especially beneficial when 
the number of tasks is greater than the number
of threads in a pool. Thus, 
the size of the threadpool is a crucial factor
as the timing of creating or destroying a thread may have a significant
impact on performance. Given that we spawn a new thread 
for a monitor object, programs with a large number 
of monitors could be adversely affected by having 
too many additional threads in the {\em ActiveMonitor} 
framework.

Using  
a fixed, and relatively small, size
threadpool as executors for all monitor objects of 
the program avoids large overheads. 
}
%Once the monitor threadpool is created, monitor tasks are designed to be submitted to a monitor thread pool executor. 
Monitor threads execute monitor tasks while   
observing the following rules. 
\begin{mrule}[Mutex Invariant] 
All the tasks of a single monitor object are executed by the same monitor thread. 
%If some executor thread $t$ is executing a task $m$ of monitor $M$, then no other thread can execute 
%$m$ before $t$ stops executing it.   
\remove{    monitor $M$. If a thread is executing 
    task $m \in M$, then no other can be executing
    $m$ at the same time. }
    \label{def:mutex}
\end{mrule}
\remove{
The above rule only prohibits concurrent executions of tasks 
and not repeated executions. This is because it is possible that a programmer may want re-execution of tasks
whenever exceptions occur. Details of our framework's exception handling features are presented in Appendix~\ref{app:exception}. 
allows customized exception handling, and thus it is possible that a task 
is executed more than once. Exception handling for asynchronous executions is performed by  }
%Rule~\ref{r:oneone}
%In general, monitor tasks are assigned to a monitor thread dynamically. 
Thus, Rule~\ref{def:mutex}
maintains the mutual exclusion of critical sections of a monitor. However, for 
correctness of executions, we require additional rules. 
Let $proc(t)$ denote the thread that submits the task $t$ to the
a monitor; $sub(t)$ be the time at which the task $t$ is submitted to the monitor, and $exe(t)$ indicate the time when
 the monitor thread starts
executing $t$. 
\begin{mrule}%[Total Order on Tasks by the ]
\label{rule:sc}
For a pair of tasks $s$ and $t$
    submitted to a monitor $M$, if $proc(s) = proc(t)$, then\\
       $ sub(s) < sub(t) \Rightarrow exe(s) < exe(t)$.
\end{mrule}
\remove{
First, consider the following claim: 
\begin{claim}
\label{claim:blocking}
If all method calls are blocking, Rules~\ref{def:mutex} and \ref{rule:sc} together guarantee that all 
method invocations take effect in program order. 
%linearizability (defined in Defn.~\ref{def:linear}). 
\end{claim}
\begin{proof} Sketched in Appendix~\ref{app:proofs}. 
\end{proof}

We now highlight a subtle point. Even though the above claim is correct, it does 
not guarantee sequentially consistent \cite{hs08} executions of non-blocking invocations across monitors.
 This is because sequentially consistent executions on multiple objects require some globally consistent view, 
 and the above rule only provides local (per thread) consistency. 
%about non-blocking tasks and their executions by monitor threads.  
The notion of linearizability \cite{hs08} is commonly used to argue correctness of parallel programs across 
multiple shared objects. Informally, linearizability is defined as follows.\\
{\bf Linearizability}:
Each method call should appear to take effect instantaneously at some moment between its invocation and response. \cite{hs08}
%\end{definition}

Observe that
this requirement is from the perspective of worker threads as it is important to provide them with a consistent 
view of their method invocations on shared monitor objects. However, 
in our model, worker threads do not directly update the shared data of monitor objects, monitor threads 
do. Thus, in presence of asynchronous executions method invocations and their response have to be reinterpreted.
In our framework, invocation of a monitor method corresponds to submission of its equivalent task to a monitor thread, and 
the method response corresponds to completion of this task by the monitor thread. To guarantee linearizability of 
method invocations under this interpretation, we need to enforce the following constraint:
\begin{mrule}[Misleading read]
\label{rule:linear_active}
%Given any two non-blocking method invocations by 
If a thread submits a non-blocking task $t$ to monitor $M$, then it must wait for $t$'s completion 
before submitting another non-blocking task $t'$ to different monitor $M'$. 
\end{mrule}}
\begin{mrule}
\label{rule:linear_active1}
Let $n_1$,~$m_2$ be two successive method invocations by a worker thread on two different monitors $M_1$ 
and $M_2$ in the user program, and let $t_1$,~$t_2$ be the corresponding task submissions at runtime. Then, 
%if $n_1$ is non-blocking
%Given any two non-blocking method invocations by 
at runtime $t_1$ must be completed before  
$t_2$'s submission. 
\end{mrule}
%Note that this rule enforces the constraint on a thread's successive invocations of methods  
%on different monitor objects.   
%Blocking method invocations in between these two calls are acceptable. 
%The example in Appendix~\ref{app:proofs} illustrates this rule's necessity for correctness.
The notions of method {\em invocation} and {\em response} used to define linearizability \cite{hw90} need a different interpretation under non-blocking executions. In short, {\em invocation} now corresponds 
to submission of the equivalent task to monitor thread, and {\em response} corresponds to 
this tasks completion. With this interpretation, the following 
result establishes correctness of executions by proving linearizable executions. 
\remove{
\subsection{Correctness Guarantees on Method Invocations}
 In order to guarantee
correctness of executions, we enforce {\em sequential consistency} \cite{hs08} for every 
application thread. 
\begin{definition}[Sequential Consistency]
\label{def:seqc}
Method calls should appear to take effect in program order.  
\end{definition}
Note that sequential consistency does not apply 
to calls made by different threads. Hence, 
for multi-threaded programs, more than one execution 
order may satisfy this requirement. 
Naturally, sequential consistency is an important 
requirement for correct program behavior. We enforce
it as follows.   
\remove{For blocking method calls, the generated tasks are submitted and executed in synchronous manner. Thus, sequential consistency
is trivially satisfied because the caller thread remains blocked until the monitor thread finishes execution the task.  
For non-blocking methods, the sequential consistency
is enforced as follows.} 
The following claim can be verified by direct application of the above rule.  
\begin{proposition}
Rule~\ref{rule:sc} guarantees sequential consistency (defined in Defn.~\ref{def:seqc}).  
\end{proposition}

However, sequential consistency is not compositional. For arguing consistency under compositional invocations,  In the {\em ActiveMonitor} framework, 
we use the following notion for compositional consistency: 
\begin{definition}[ActiveMonitor Linearizability]
\label{def:linear_active}
Each method call on an active monitor should appear to take effect instantaneously at some moment between submission of 
its corresponding monitor task and 
completion of this task.  
\end{definition}

Linearizability, as defined in Defn.~\ref{def:linear}, is applicable to updates on shared data 
irrespective of the update mechanism --- whether it be lock based critical sections, or lock-free/wait-free algorithms.
Whereas 
in our framework, only the monitors threads
lock the shared data to apply updates. Hence, Defn.~\ref{def:linear_active} allows more freedom for executions 
in terms of instruction re-ordering, operation combining \cite{combine} and flat-combining \cite{flat-combine}. 
We make the following two claims with regard to {\em ActiveMonitor Linearizability}. 
}

\begin{lemma}
\label{lem:lock} With
rules~\ref{def:mutex},~\ref{rule:sc}, and~\ref{rule:linear_active1} we get an execution that 
is equivalent to a lock-based execution. 
\remove{together 
 guarantee linearizability with method invocations being interpreted as task submissions, and 
 method response interpreted as task completion. }
\end{lemma}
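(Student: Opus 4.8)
The plan is to assign each monitor task a single linearization point and then show that ordering all tasks by these points yields a schedule that a conventional lock-based monitor could have produced. For a task $t$, let $comp(t)$ denote the instant its execution finishes, so that $sub(t) \le exe(t) \le comp(t)$ always holds. I would take $exe(t)$, the instant the monitor thread begins executing $t$, as the linearization point, and define a global order $\prec$ on all tasks by increasing $exe(t)$. The claim to establish is that $\prec$ is a legal lock-based execution: it serializes the tasks of each monitor and respects each worker thread's program order, while producing exactly the states and return values observed at runtime.

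First I would establish per-monitor serialization. By Rule~\ref{def:mutex}, all tasks of a monitor $M$ are run by one monitor thread, which executes them one at a time; hence their $exe$-values are distinct and $\prec$ restricted to $M$ is a total order coinciding with the actual execution order. Since the monitor thread runs a task only once its guard $P$ (the \textsf{waituntil} predicate) holds, the state of $M$ evolves under $\prec$ exactly as in a sequential execution in which each method proceeds past its condition only when that condition is true --- precisely the behavior of a lock protecting $M$ together with the conventional \texttt{while}-loop condition check. Thus $\prec$ respects the sequential specification of every monitor.

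Next I would show $\prec$ respects each worker thread's program order, which is what makes the execution indistinguishable from one in which the thread's (blocking) calls run sequentially. Consider two successive invocations by one worker thread, with corresponding tasks $t_1$ submitted before $t_2$. If both target the same monitor, Rule~\ref{rule:sc} gives $exe(t_1) < exe(t_2)$ directly. If they target different monitors, Rule~\ref{rule:linear_active1} forces $comp(t_1) < sub(t_2)$, so the two execution intervals cannot overlap and $exe(t_1) \le comp(t_1) < sub(t_2) \le exe(t_2)$; hence $t_1 \prec t_2$ again. The same inequality chain shows, for any two tasks whose intervals $[sub,comp]$ are disjoint in real time, that $\prec$ agrees with real-time order, so $\prec$ never contradicts the reinterpreted invocation/response ordering.

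Finally I would assemble the equivalent lock-based execution: equip each monitor with its own lock and process the tasks in $\prec$-order, each acquiring its monitor's lock, re-checking its guard, executing its statements, and releasing. Per-monitor totality makes every acquire/release well-nested, and per-thread program-order agreement makes the schedule one that the worker threads could have generated with ordinary blocking calls; since the statements executed and the values returned are identical to those at runtime, the two executions are equivalent. The main obstacle I anticipate is the cross-monitor case, because per-object ordering does not compose to a global program-order-respecting schedule in general (sequential consistency is not compositional); the whole argument hinges on showing that Rule~\ref{rule:linear_active1} removes exactly the cross-monitor overlap that could otherwise reorder a single thread's operations, and on checking the corner case in which a thread issues a non-blocking task and then another operation --- handled uniformly by Rule~\ref{rule:sc} for a common monitor and by Rule~\ref{rule:linear_active1} otherwise --- so that dependencies carried through awaited (blocking) return values are still honored.
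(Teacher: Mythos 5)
Your proposal is correct and takes essentially the same route as the paper's proof: Rule~\ref{def:mutex} gives per-monitor mutual exclusion and serialization, and a per-thread case analysis (same monitor handled by Rule~\ref{rule:sc}, successive calls on different monitors handled by Rule~\ref{rule:linear_active1}, with blocking semantics covering the rest) shows each worker thread's submission order is preserved, yielding a schedule the worker threads could have produced under ordinary locks. Your explicit construction of a global order by the execution instants $exe(t)$ is just a more formal packaging of the argument the paper sketches.
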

\begin{proof} Sketched in Appendix~\ref{app:proof_lock}. \end{proof}
\remove{
\begin{proof}
\remove{
Sketch: We show linearizablity by showing that enforcing Rules~\ref{def:mutex},~\ref{rule:sc}, and~\ref{rule:linear_active1} 
results in an execution that is equivalent to a lock-based execution. Since lock-based executions are linearizable, 
the result follows. 

Proof:}
We show that for any execution in our model there exists 
an equivalent lock-based execution.
Since all tasks of any monitor object are executed by a single thread 
due to Rule~\ref{def:mutex}, mutual exclusion is preserved just as in any lock-based
execution. We only need to show that the order of execution of the tasks
corresponds to a schedule in which worker threads execute the tasks.

\remove{we show that the sequence of tasks executed corresponds to a
sequence possible for a lock-based execution.}
It is sufficient to show that all tasks submitted by a single worker
thread execute in the order of submissions.
Let $s$ and $t$ be two consecutive tasks submitted by the thread.
If they are submitted for the same monitor, then the Rule~\ref{rule:sc} preserves the order. 
%is preserved due to . 
If $s$ is a blocking task, then by definition of blocking task, 
$t$ cannot submitted before $s$ 
is completed. Hence, execution of $s$ precedes execution of $t$.
If $s$ is a non-blocking task and is on a different monitor object from $t$,
then due to Rule~\ref{rule:linear_active1} we wait for $s$ to finish before submission of $t$.
%[]
\end{proof}
}
Observe that Rule~\ref{rule:linear_active1} forces a worker thread to wait for the previous 
task to finish even if that task was on a different monitor object. 
In many applications, a programmer may not
require this constraint on different monitor objects. 
Hence, to improve performance, we can drop 
Rule~\ref{rule:linear_active1} for such applications. We show that the execution is still linearizable with 
the new interpretation of invocation and response. 

Note that in the standard model of concurrent history \cite{hw90},
the thread history is always sequential although the object
history may not be sequential. In our model, due to non-blocking executions, we have the dual
property: an object history is always sequential whereas a thread
history may not be so. We first define the thread order
in presence of non-blocking operations. Let 
$s_1, s_2,...,s_m$ be $m$ operations performed on a monitor object by a worker thread. Let $s_i < s_j$ denote that 
$s_i$ was executed before $s_j$. 
In the standard model, all operations are blocking and 
we get that $s_i < s_{i+1}$ for all $i$. 
In our model, we define the order as follows.
If $s_i$ is blocking, then $s_i<s_j$ for all $j > i$.
If $s_i$ and $s_j$ are operations on the same object and $i < j$, then $s_i < s_j$.
If $s_i$ is non-blocking and the result of $s_i$ is required
before $s_k$, then $s_i < s_j$ for all $j \geq k$.

\begin{lemma} \label{lem:linearizable}
With Rules~\ref{def:mutex} and \ref{rule:sc}, we get an execution that
is linearizable.
\end{lemma}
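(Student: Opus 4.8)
The plan is to exhibit an explicit linearization and verify the two standard obligations of \cite{hw90} under the reinterpreted notions of invocation (task submission) and response (task completion). For each task $t$ let $fin(t)$ denote the time at which its monitor thread finishes executing $t$, so that $t$ occupies the interval $[sub(t),fin(t)]$ and $sub(t) < exe(t) < fin(t)$. I would take $exe(t)$ --- the instant at which the owning monitor thread atomically performs $t$ --- as the linearization point of $t$, and let the global linearization $S$ be the total order on all tasks induced by these instants (ties among tasks of different objects broken arbitrarily). Because each linearization point lies strictly inside its task's interval, $S$ at once meets the requirement that each method call appear to take effect at some moment between its submission and its completion.

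Next I would establish legality, i.e. that $S$ is consistent with the sequential specification of every monitor object. This is where Rule~\ref{def:mutex} does the work: since all tasks of a single monitor run on one monitor thread, that thread executes them strictly one at a time, so restricting $S$ to the tasks of any fixed object reproduces exactly the sequence in which its monitor thread actually ran them. That sequence is a genuine sequential execution and hence legal by construction --- this is precisely the ``dual'' observation quoted before the statement, that each object history is already sequential, so no reordering is needed to make the per-object projections legal.

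The substantive step is to show that $S$ respects the thread order $<$ defined just above the lemma, which I would do by checking that each of its three clauses forces $exe(s_i) < exe(s_j)$. If $s_i$ is blocking, the issuing worker thread cannot submit any later $s_j$ until $s_i$ has completed, so $exe(s_i) < fin(s_i) \le sub(s_j) < exe(s_j)$. If $s_i$ and $s_j$ act on the same object with $i<j$, the same thread submits them in program order, giving $sub(s_i) < sub(s_j)$, and Rule~\ref{rule:sc} upgrades this to $exe(s_i) < exe(s_j)$. If $s_i$ is non-blocking and its result is required before $s_k$, then evaluating $s_i$'s future is a blocking wait performed before the submission of $s_k$, so $exe(s_i) < fin(s_i) \le sub(s_k) \le sub(s_j) < exe(s_j)$ for every $j \ge k$. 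The same $exe(\cdot)$-ordering in fact respects the real-time precedence of any two tasks whose intervals do not overlap, since $fin(A) < sub(B)$ forces $exe(A) < fin(A) < sub(B) < exe(B)$; this is exactly why relaxing Rule~\ref{rule:linear_active1} costs nothing for linearizability.

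The main obstacle I anticipate is not any single inequality but pinning down the non-blocking clause rigorously: I must argue that ``the result of $s_i$ is required before $s_k$'' corresponds operationally to a blocking evaluation of the future returned by $s_i$, placed in program order before the submission of $s_k$, and that this blocking evaluation is the only mechanism by which a non-blocking task can constrain a later one. Once that correspondence is fixed, the three clauses together with the per-object legality show that $S$ is a legal sequential order containing the thread order, which is linearizability under the submission/completion interpretation, using only Rules~\ref{def:mutex} and~\ref{rule:sc}.
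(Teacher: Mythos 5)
Your proposal is correct and follows essentially the same route as the paper's own proof: pick a linearization point inside each task's interval, use Rule~\ref{def:mutex} for per-object sequentiality (legality), and verify the three clauses of the thread order via the same three-case analysis, with Rule~\ref{rule:sc} handling the same-object case. The only difference is immaterial: you linearize at the start of the monitor thread's execution ($exe(t)$) while the paper linearizes at its completion, and both instants lie strictly between submission and completion, so the argument goes through identically.
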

\begin{proof} Sketched in Appendix~\ref{app:proof_linearizable}. \end{proof}
\remove{
\begin{proof}
It is sufficient to show that
for every thread history there exists an equivalent
sequential thread history that is consistent with execution by the monitor thread.
We get this sequence by considering as linearization point
for an operation the instant at which the monitor thread finishes
executing the corresponding task. We show that this order
is consistent with the thread order.

First consider the case when $s < t$ because $s$ is a blocking operation.
In this case, task $t$ cannot be submitted before task $s$
is executed. Hence, the order of execution of monitor tasks preserves the thread
order. Second, consider the case
when $s < t$ because they are operations on the same monitor object and $s$ is
invoked before $t$. In this case Rule~\ref{rule:sc} guarantees that the order of execution
is $s$ followed by $t$.
Finally, consider the case when $s$ and $t$ are tasks submitted by 
the same thread, $s$ is non-blocking but its results are used by $t$.
In this case, our implementation blocks for the results rom $s$ before
executing any further. Hence $t$ would be submitted later than the execution 
of $s$.
\end{proof}
}
%To exhibit a global sequential history that preserves the thread order and the 
%object order it is sufficent to extend the order given by the thread order

Observe that the legal sequential history we get may not preserve the order
of invocation of operations, but only the thread order.
%The example below illustrates this point.
\remove{
Appendix~\ref{app:proofs}
 presents informal arguments to prove the validity of these claims. In our implementation, Rule~\ref{def:mutex} is trivially enforced -- {\bf weilun check how/why}. We enforce Rule~\ref{rule:sc} by using a separate first-in-first-out queue for storing 
tasks  for each monitor object.  }
\remove{
\subsection{Task Execution Order Policies}

Consider the conventional Java implementation of Fig.~\ref{fig:bb_exp}(a) for 
the bounded-buffer problem. We argue that adding an additional constraint of starvation freedom to this 
program requires significant changes to the implementation. 
% If we require starvation freedom as an additional constraint for this program,  
Observe that this is true for any other multi-threaded implementation in any of the prevalent programming languages.
 This is because even though starvation freedom is a runtime property, ensuring that 
the implementation guarantees it makes it an additional requirement of correctness. % requirement. 
%Suppose that we need to satisfy an additional constraint that the program is starvation free.
%Most of the prevalent programming languages would require a significant change to the default implementation of 
%Fig.~\ref{any multi-threaded program in which worker threads read and write on some shared data -- for example
%the current set of constructs provided most of the programming languages  , 
%changing the desired behavior of  
%Clearly, our desire to deal with starvation and staleness/freshness increases
%the complexity of the simulations. However, we believe starvation and
%staleness/freshness are crucial issues in constructing robust concurrent
%systems, and we fail to understand how other concurrent designers and
%developers can ignore them. Currently, many designers ignore these issues to
%maximize throughput. However, we argue that it matters little how fast a
%concurrent system runs if users fail to generate correct results. Furthermore,
%we believe that many concurrent programmers are unaware of these issues when
%developing concurrent programs. Or if they are aware, do not realize the
%language semantics may undermine their best efforts to deal with these issues.
%The correctness guarantees discussed above ensure the consistency of the updates on the shared data. 
 One of the most important contributions of our approach lies in
allowing execution order policies on multi-threaded programs. Under our framework, the programmer does not 
need to change the program implementation in order to provide starvation freedom/fairness, and can achieve
this behavior merely by providing an execution policy. We provide this feature with a policy 
configuration file in which the required execution policies can be defined. Running the pre-processor
with this configuration file as a parameter enables the framework to impose the ordering derived from the policy on the corresponding 
monitor tasks.
This is made possible because of the design principle of a monitor being a thread. The monitor
tasks are executed by the monitor thread, and this allows the pre-processor to enforce ordering of critical section
executions by monitor thread
without changing the implementation. 
%Based on the policy configuration, the framework controls the order 
%of execution of monitor tasks.
Our current implementation allows
three execution order policies that 
control the order of execution for the 
monitor methods/tasks.\remove{ A programmer can specify the policy through a configuration
file for the {\em ActiveMonitor} preprocessor.} The 
policies are called: safe, fair, and priority. Their formal definitions are presented in Appendix~\ref{app:policy}.
Informally, the safe policy guarantees that executable tasks are eventually executed. The exact execution order of tasks depends on 
the runtime scenario. We use this policy by default, 
unless otherwise specified.
The fair policy enforces fairness -- first-invoked-first-executed -- of method 
invocations across threads. The priority policy can be used 
for implementations which require a particular 
method to be treated preferntially over other methods.
}
\remove{
\begin{definition}[Safe Policy]
\label{def:safe} If a task $T$ of a monitor $M$ is executable and 
    there is no other executable task of $M$, then $T$ is executed.  
\end{definition}
Recall that a task is executable if its precondition 
is satisfied.

\begin{definition}[Fair Policy] Let $sub(t)$ 
denote the time of task $t$'s submission at monitor $M$. 
Fair execution policy guarantees that 
$t$ is executed if there is no 
    other task $s$ (submitted to $M$), such that $s$ is executable and $sub(s) < sub(t)$. 
    \label{def:fair}
\end{definition}
The fair policy enforces fairness of method 
invocations across threads. 
\remove{This policy can be used to avoid starvation,
or to prohibit threads  
from accessesing stale information.} Note that a fair execution policy 
is also safe policy. 

In addition, programmers can also assign priorities
to monitor methods. If a method is assigned 
a priority then all its corresponding tasks 
receive the same priority value. 
\begin{definition}[Priority Policy] Let $priority(t)$ 
denote the priority assigned to
the method that corresponds to task $t$. 
Priority policy guarantees that  on a monitor $M$, $t$ is
    executed if there is no other task $s$ (on $M$), such that $s$ is executable 
    and $priority(s) > priority(t)$. 
\end{definition}

In our prototype implementation, 
we restrict the priority levels to only 
two options: {\em high} and {\em low}.
}
\remove{The priority policy can be used 
for implementations which require a particular 
method to be treated preferntially over other methods. 
A priority policy is also a safe policy. } 
\remove{
It is also possible to combine two policies. 
\begin{definition}[Fair-Priority Policy] For a task $t$ of a monitor $M$, 
    $t$ is executed if there is no other task $s$ of $M$, such that $s$ is 
    executable and $priority(s) > priority(t)$, or $priority(s) = priority(t)$
    and $sub(s) < sub(t)$. 
\end{definition}

The 
feature of execution order policies 
provides a simple yet 
powerful mechanism for exercising a better control 
of program behavior, and could be of significant
help for programmers
who want to implement some specific use-cases.  
The safe policy is geared towards maximizing throughput, while the fair execution
policy is more suited for tackling issues such as
thread starvation and staleness/freshness
of data. Priority policy can be used for use cases
where one needs to have a specific method to 
have a priority over other method(s).
}
\remove{
For example, 
for some specific application, one might want 
the {\sf \small take()} method on a bounded-buffer 
to have higher priority for execution. In our 
framework, it is easily achieved by specifying the 
priority policy. Note that these policies are 
embedded in the program at compile time, and 
hence are static for the run. 
Thus,   
programmers are able to gain more flexibility when designing their programs with
this feature. 
Furthermore, programmers can implement only one 
program for different purposes by choosing different policies.
Fig.~\ref{fig:grw_ex} in Appendix \ref{app:policy}~(page~\pageref{fig:grw_ex}) shows an example of a readers/writers monitor. This
monitor can be executed to provide fairness, or 
preference to writers or readers
 without 
modifying the code but by only specifying different
execution policies. }
\remove{
We make the following claims for this program: 
\begin{proposition}
    The readers/writers monitor shown in Fig.~\ref{fig:grw_ex} guarantees fairness
    when used with fair execution policy.
\end{proposition}
\begin{proposition}
    The readers/writers monitor shown in Fig.~\ref{fig:grw_ex} can be executed in a mode providing
    prefernce to readers 
    by using the priority policy and assigning priorities
    `high' to startRead() and `low' to startWrite() 
    methods. Just by inverting the priority
    assignments, the same program can be executed 
    in a mode that provides preference to writers. 
\end{proposition}
}
\remove{
However, to gain more design
flexibility and better parallelism, {\em ActiveMonitor} provides two additional
programming paradigms shown in Fig.~\ref{fig:blocking_paradigm}. 
\begin{figure}[ht!]
\begin{multicols}{2}
    \begin{Verbatim}[fontsize=\footnotesize,gobble=8]
        public void greedy() {
          request { 
              x = m.blocking(); 
          } greedy {
            // x and m.blocking
            // independent computations 
          }
        }
    \end{Verbatim}
    \begin{Verbatim}[fontsize=\footnotesize,gobble=6]
        public void prompt() {
          attempt { 
            x = m.blocking(); 
          } until(timeout, unit) {
            // timeout handler
          }
        }
    \end{Verbatim}
\end{multicols}
    \caption{Examples of two blocking programming paradigms.}
    \label{fig:blocking_paradigm}
\end{figure}

\paragraph{Greedy Invoker} 
Programmers can identify  a set of computations that are independent to the
blocking method and its return value. {\em ActiveMonitor} greedily executes the
blocking method and the independent computation in parallel.  The syntax is 
shown in the function $greedy$ in Fig.~\ref{fig:blocking_paradigm}. The
$request$ block indicates the blocking method invoked. The $greedy$ block
includes the independent computation. 

\paragraph{Prompt Invoker} 
Programmers can set a timeout for a blocking method and customize a timeout
handler for their program. The syntax is shown in the function $prompt()$ in 
Fig.~\ref{fig:blocking_paradigm}. The attempt block indicates the blocking
method invoked. The $until$ has two parameters, $timeout$ and $unit$. The
$timeout$ indicates the maximum time to wait; the $unit$ indicates the time
unit of the timeout argument. 
}

\remove{
Let us first provide an overview of the approach of the {\em ActiveMonitor} framework. We try to increase parallelism 
by allowing non-blocking execution of critical sections. 
But we want to achieve this with minimal set of 
new concepts, keywords and departures from the conventional
multi-threaded programming. 
Towards this objective, we use automatic signaling
approach \cite{hg13} for conditional synchronization 
between threads, and on top of it apply 
the idea of non-blocking execution of monitor 
methods. Using automatic signaling simplifies the 
code and reduces the possibilities of bugs that 
result. As stated earlier, automatic signaling 
is not a contribution of our approach and merely 
a technique that we adopt for its benefits. We 
introduce two novel constructs and their 
realization. Firstly, by introducing the {\sf \small nonblocking}
keyword
to method declarations, we allow the programmer to 
indicate which monitor methods could be executed
in a non-blocking manner. The methods of monitor 
objects 
that manipulate the shared data are treated as tasks 
 to be executed. Our second idea is to instantiate
 monitor objects as threads.
 Invocation of the monitor methods
is replaced by submission of equivalent tasks to the 
monitor thread.  
If there 
are many monitors then the size of the
monitor
threadpool limits the
number of monitor threads created (described in the previous section). 
Parallelism in critical 
sections is injected by having the monitor methods (that form critical sections) perform non-blocking executions whenever 
possible. 
 If the method
declaration includes the {\sf \small nonblocking} keyword, then
the task submitted
to the thread pool is 
executed in an asynchronous manner by the monitor thread. A {\em future} reference is 
returned to the thread that called the method so that
if it needs to collect the result of the method 
later, it can do so by calling {\sf \small get()} on the reference.
Having a thread pool for monitor threads 
allows for increased parallelism in completion of 
the tasks, as well as limits the number of monitor threads created.
Conditional synchronization between application threads
is managed by the 
underlying system using the automatic
signaling approach.

It is important to note that 
most of the steps described above are performed by our framework. 
The programmer only needs to use the 
three keywords: {\rmfamily activemonitor}, {\rmfamily waituntil} 
and {\rmfamily nonblocking} to avail the benefits of a 
shorter (and cleaner) code, and performance gains associated with asynchronous execution. 

Automatic signaling mechanism provides a simplified syntax
with better runtime performance, thus alleviating 
difficulties faced by programmers in writing
multi-core programs in order to derive the potential 
performance gains. 
By allowing the monitor objects as independent
threads on their own further enhances the runtime 
performance of the overall system. Furthermore,
In addition, the 
{\em ActiveMonitor} framework provides three policies for increasing the flexibility of
programming design. 

In general, {\em ActiveMonitor} introduces the monitor task derived from the monitor
method, thereby executing monitor method invocations by submitting monitor
tasks to the monitor thread pool. The monitor thread pool maintains mutual
exclusion of monitor tasks by using a property called mutex invariant. Finally,
{\em ActiveMonitor} provides three task execution order policies so that programmers
can have more flexibility in their programs.  The concepts of {\em ActiveMonitor} are
discussed in detail in this section.
}

\section{Evaluation} \label{sec:eval}
We now present the experimental evaluation of our proposed
approach. We evaluate our prototype implementation  
on five multi-threaded problems. %We discuss five of them in this section; and one use-case in Appendix~\ref{app:results}. 
The first problem is a bounded-buffer problem with multiple producers and consumers, the second is  
a sorted linked-list (of integers) on which threads perform updates. 
The remaining three are 
conditional 
synchronization problems involving varying levels of complexity in their conditional predicates. 
 We implemented the  
 problems 
 using the {\em ActiveMonitor} framework, Java's reentrant locks, and AutoSynch \cite{hg13}. Additional 
 use-case specific
 implementations (details given in problem descriptions below) are also compared for bounded-buffer, and the linked-list
 problem. 
For all the problems,
we use a benchmark in which each worker thread
performs $512000$ operations on shared
data protected by monitors. 
All the experiments are conducted on a machine with
$1$ socket, $2$ AMD Opteron $6180$
SE $12$ Core, $2.5$ GHz CPUs (= $24$ hardware threads) and $64$ GBs memory running Linux $2.6.18$.
We vary the number of worker threads, and measure the time 
required for all of them to complete their operations. All threads perform a fixed number 
of warm-up operations on shared data before starting the time measurements. 
For each problem, we measure runtimes for
$25$ runs, and report the mean values 
 after removing the highest and the lowest values.    

\remove{
We briefly describe the problems first.  \\
{\bf Bounded-buffer (BB) \cite{dij65, dij71}}: Items plain Java objects. Every producer's 
        put call is  
        non-blocking call; and 
        every consumer removes an item with a blocking invocation.\\

{\bf Sorted linked-list (SLL) of integers}: 
A linked-list of integers sorted 
in non-decreasing order. We pre-populate the 
linked-list with $5000$ values, and then perform 
insert and remove operations on it. Each thread
inserts or removes (with equal probability) a random integer. Seeds are used for randomization 
so that the threads generate the same sequence of random integers across runs as well as across implementations. 
 Both insert and remove operations are 
non-blocking. \\
        %The $startRead$ and $startWrite$ are
        %blocking, while $endRead$ and $endWrite$ are nonblocking.\\
%    \item[Group mutual exclusion and room synchronization(GM) \cite{jou98}] $n$
%        processes try to attend one of $m$ sessions. Processes can attend the
%        same session concurrently. However, they cannot attend the difference
%        sessions at the same time. 
    %\item [Dining philosophers \cite{dij71}] This problem requires
    %    coordination among philosophers sitting around a table and is described
    %    in \cite{dij71}. {\bf What is blocking and non-blocking?}
{\bf Round-Robin (RR) Access to CS}: Every worker thread accesses the
        monitor in round-robin order. All operations on the monitor are non-blocking. 
        The number of worker threads is varied from $2$ to $24$.\\
        %This setup simulates a barrier. 
{\bf Parametrized Bounded-Buffer (PBB) \cite{dij65, dij71}}: Producers put a collection of items 
into a shared buffer, while consumers remove a number of items from the buffer. The size of inserted items, 
and the count of items removed, both are randomly decided at runtime. The number of producers ( = number of 
consumers) is varied from $2$ to $24$.\\
%The {\sf
%\small put} function {\sf \small items}; the {\sf \small take} 
%function has a parameter, {\sf \small num}, indicating the number of items 
%taken. \\ 
{\bf Ticketed-Readers/Writers (TRR) \cite{chp71,bh05b,han72}}:
    %The first approach is not FIFO \cite{bh75b}. 
    A
    ticket is used
        to maintain the access order for readers and writers \cite{bh05b}. Every reader/
         writer gets a ticket number indicating its arrival order. Readers
        /writers wait on the monitor for their turn. 
        On their turn, they enter the monitor but  
        do not perform any computation inside
        the monitor, and immediately exit. Arrival operations are blocking, 
        whereas operations after getting access to critical sections are non-blocking.  
 }

\remove{
 For the bounded-buffer 
 problem we also evaluated the performance of Java's ArrayBlockingQueue -- an inbuilt implementation of bounded-buffer
in Java's concurrency library \cite{lea05}. 
For the 
 sorted linked-list problem, we used many additional techniques from the literature for comparison 
 with our implementation. These techniques are: fine-grained locking
 \cite{hs08}, 
optimistic locking \cite{hs08}
lazy locking \cite{hs08,hhl+06}, 
lock-free \cite{hs08, har01, mic02} operations, 
and transactional memory (implemented in Java) 
\cite{ksf10} for the comparison. }

For all the experiments, we restrict the {\em ActiveMonitor} framework  to create only one monitor thread. 

%\subsection{Results \& Analysis}
%\input{results_desc}

We now discuss the setup of experiments and their results in a problem specific manner. 
Across all results, we denote the implementation techniques with the following
notation: {\sf \small LK}: Reentrant locks, {\sf \small AS}: {\em
AutoSynch}\cite{hg13}, {\sf \small AM}: {\em ActiveMonitor} (this paper).  For
each problem, all the runtimes reported in their results exhibit negligible variance ($<5\%$ of the mean
value) across runs.  

\subsection{Bounded-Buffer (BB) problem \cite{dij65, dij71}}

{\em \underline{Setup}}: Items are plain Java objects. Every producer's 
        {\sf \small put} invocation is 
        non-blocking, and 
        every consumer's {\sf \small take} is blocking. We also compare
runtimes of Java's {\sf \small ArrayBlockingQueue} \cite{lea05}, denoted by {\sf \small ABQ},  
 based implementation. Number of producers (= number of consumers) is varied from $2$ to $24$.  
We perform three types of experiments, and measure: (a) runtimes wieh varying number of threads 
for a buffer of fixed size (=$4$). (b) runtimes with varying buffer-size for fixed number of producers/consumers (=$16$ each). (c) runtimes with varying limit on non-blocking tasks allowed on the buffer of size $4$ with $16$ producers and consumers each.   
\begin{figure*}[t!]
\centering
\begin{subfigure}[b]{0.32\textwidth}
%  \centering
  \includegraphics[height=42mm]{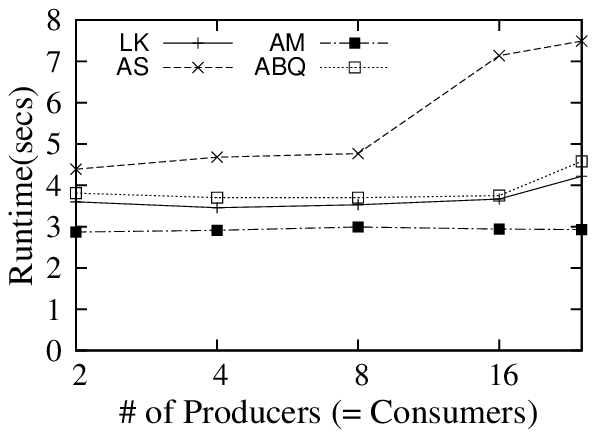}
  \caption{Buffer size~$=4$}
  \label{fig:pc_eval}
\end{subfigure}
\begin{subfigure}[b]{0.33\textwidth}
%  \centering
  \includegraphics[height=42mm]{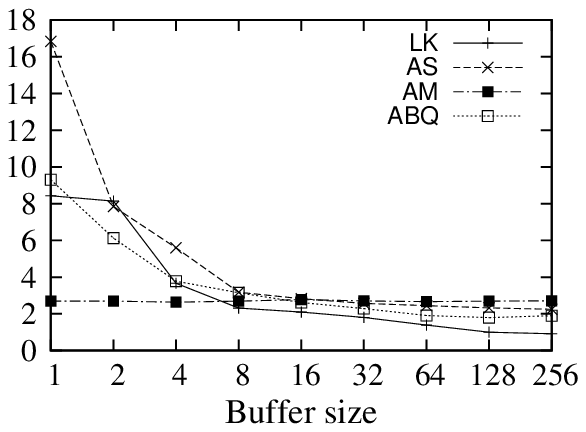}
  \caption{Varying buffer size}
  \label{fig:vpc_eval}
\end{subfigure}
\begin{subfigure}[b]{0.32\textwidth}
%  \centering
  \includegraphics[height=42mm]{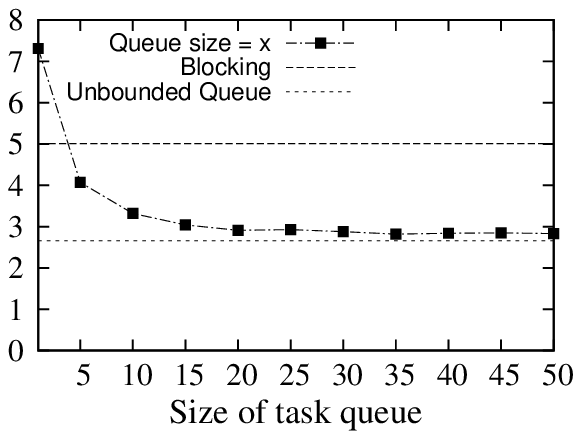}
  \caption{Varying queue size for {\em tasks}}
  \label{fig:vqpc_eval}
\end{subfigure}
    \caption{Runtimes for Bounded-Buffer problem}
  \label{fig:bbruntimes}
\end{figure*}

\begin{flushleft} {\bf Results}\end{flushleft}
  Collective results are shown in Fig.~\ref{fig:bbruntimes}.  Fig.~\ref{fig:pc_eval} plots the runtimes
for experiment (a). Fig.~\ref{fig:vpc_eval} shows the runtimes for experiment (b). This result
highlights the benefits of non-blocking executions. Recall that 
only {\sf \small put} calls are non-blocking for {\sf \small AM}; all
{\sf \small take} calls are blocking. For small size buffers, \lk, \as, and {\sf \small ABQ} are much slower in comparison
to \am ~due to their blocking insert
operations.  For larger buffer sizes, \lk~ and {\sf \small ABQ}
implementations perform slightly faster than {\sf \small AM}. On larger buffer sizes, the frequency
of threads getting blocked out is 
much lower, and both \lk~  and \abq~ 
benefit from greedy implementation of Java in lock acquisition; 
but \am~ends up performing slightly more work in enforcing order on tasks (as per Rules~$1$,$2$). 

 Fig.~\ref{fig:vqpc_eval} shows the results for experiment (c). `Limit' on the size of task queue 
 means that whenever the queue maintained by the monitor thread to store tasks, reaches the `limit' value 
 then even a non-blocking task submission to the monitor is force to block. 
Making all the invocations blocking (queue size limit = $0$) leads to an overall runtime (for
completion) of $\sim5$ seconds, whereas with no limit on the
size of task queue the runtime is
 $\sim3$ seconds. This observations clearly shows that our non-blocking task based approach is beneficial. 
However, an unbounded task queue might grow quite large. But as shown by the plot, 
 even for a short limit (around $20$) on the size of this queue, the runtime performance converges to optimal
performance (of unbounded size).  Thus, this result establishes that 
only a small number of non-blocking tasks can provide significant
performance benefit.

\subsection{Sorted Linked-List (SLL) Problem}
{\em \underline{Setup}}: 
A linked-list of integers sorted 
in non-decreasing order. The number of worker threads is varied from $2$ to $24$. 
The list is pre-populated
with $5000$ integers. Each worker thread
inserts or removes (with equal probability) a random integer. For a fair comparison, seeds are used for randomization 
so that the threads generate the same sequence of random integers across runs as well as across 
different monitor implementations.  We make every thread execute some local instructions, simple
mathematical computations, outside the critical section between any successive calls on the monitor.  
This is done in attempt to simulate some practical program behavior in which generation of data
takes some time\footnote{
Moreover, 
The results for the BB problem already showed that we perform better than conventional monitors for saturation tests.}. 
 Both insert and remove operations are 
non-blocking. Additional techniques used for performance comparison: lock-free implementation ({\sf \small LF}) \cite{hs08, har01, mic02}, 
fine-grained
locking ({\sf \small FG})  \cite{hs08}, and transactional memory ({\sf \small TM})
implementation in Java \cite{ksf10}. We measured runtimes for two types of experiments: 
(a) runtimes with varying number of threads, when each thread performed $250$ mathematical computations (integer additions)
outside the critical section between successive calls to the monitor. (b) runtimes with varying 
number of threads and varying number of operations performed outside the critical section. 
\begin{figure}[t!]
\centering
\begin{subfigure}[b]{0.40\textwidth}
  \centering
  \includegraphics[height=42mm]{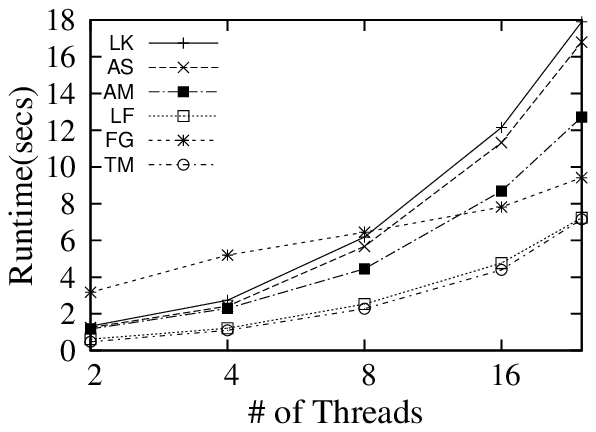}
  \caption{Runtimes for experiment (a)}
  \label{fig:list_eval}
\end{subfigure}
\begin{subfigure}[b]{0.59\textwidth}
  \centering
  \includegraphics[height=42mm]{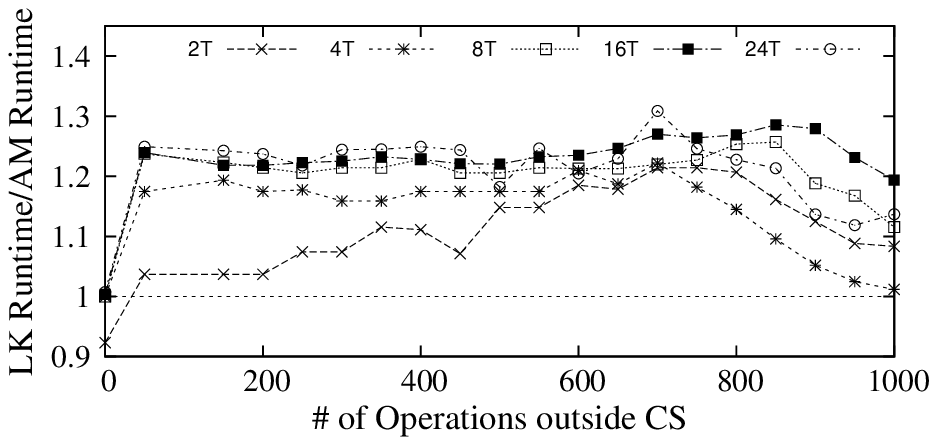}
  \caption{Runtime ratio with varying number of operations outside CS}
  \label{fig:vsin_eval}
\end{subfigure}
  \caption{Results for sorted linked-list problem}
  \label{fig:llruntimes}
\end{figure}
\begin{flushleft} {\bf Results}\end{flushleft}
%time, and we try to replicate Recall that the threads perform insertion or
%removal on the pre-populated list with equal probability. 
Fig.~\ref{fig:list_eval} shows the results for experiment (a). For this experiment, the {\em ActiveMonitor} implementation ({\sf \small AM})
clearly outperforms the reentrant lock based monitor ({\sf \small LK}), and
{\em AutoSynch} ({\sf \small AS}). However, lock-free ({\sf \small LF}) and transactional memory ({\sf \small TM})
based implementations perform better. This is because both of them use
optimized algorithm for the linked-list data structure. Whereas, we do not use
any linked-list specific optimization, and only use non-blocking executions on
critical sections. 

Fig.~\ref{fig:vsin_eval} plots the ratio of runtimes using \lk~
implementation and \am~implementations for experiment (b). The ratio with
no operations outside the critical section is close to $1$, i.e. our
approach provides no performance benefit. This is expected because no
operations can be performed concurrently with the non-blocking updates to the
list, and the
opportunity of parallelism is effectively absent. As the number of operations outside the
critical section is increased, we see significant performance gains with our approach, upto $\sim30\%$, due to
increased parallelism introduced by non-blocking executions.  

\subsection{Conditional Synchronization Problems}
We used three thread synchronization problems that involve different levels of complexity in their
conditional predicates. For all three, we measured the time taken for all the worker threads to complete the
designated number ($512000$)  
of operations on the monitor.   \\
\hfill\\
{\bf Round-Robin (RR) Access}: Every worker thread accesses the
        monitor in round-robin order. All operations on the monitor are non-blocking.
        The number of worker threads is varied from $2$ to $24$.\\
\hfill\\
        %This setup simulates a barrier. 
{\bf Parametrized Bounded-Buffer (PBB) \cite{dij65, dij71}}: Producers put a collection of items 
into a shared buffer, while consumers remove a number of items from the buffer. For producers, the number of items to 
be inserted, 
and for consumers, the count of items to be removed, both are randomly decided at runtime. Similar to the SLL problem, seeds 
are used for randomization for fair evaluation across implementations.   The number of producers (= number of 
consumers) is varied from $2$ to $24$.\\
\hfill\\
%The {\sf
%\small put} function {\sf \small items}; the {\sf \small take} 
%function has a parameter, {\sf \small num}, indicating the number of items 
%taken. \\ 
{\bf Ticketed-Readers/Writers (TRR) \cite{chp71,bh05b,han72}}:
    %The first approach is not FIFO \cite{bh75b}. 
    A
    ticket is used
        to maintain the access order for readers and writers \cite{bh05b}. Every reader/
         writer gets a ticket number indicating its arrival order. Readers/writers wait on the monitor for their turn. 
        On their turn, they enter the monitor but  
        do not perform any computation inside
        the monitor, and immediately exit. Operations for arrival on monitor and waiting for turn to access 
        the monitor are blocking, 
        whereas operations after getting access to critical sections are non-blocking. The number of writer threads is varied from $2$ to $24$. The number of readers is  
         kept $5$ x the number of writers. 

\begin{figure*}[t!]
  \centering
\begin{subfigure}[b]{0.33\textwidth}
  \centering
  \includegraphics[height=42mm]{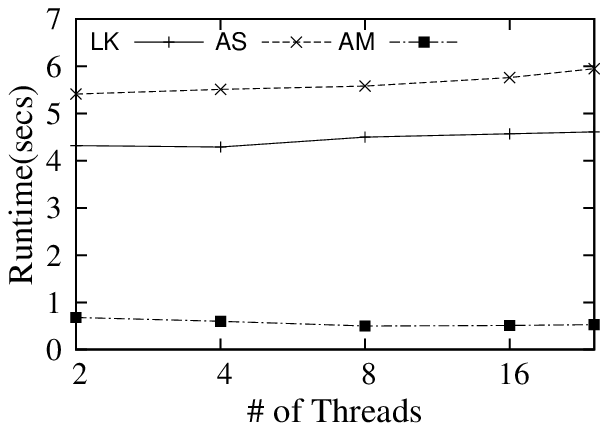}
  \caption{Round-robin problem}
  \label{fig:rr_eval}
\end{subfigure}
\begin{subfigure}[b]{0.32\textwidth}
  \centering
  \includegraphics[height=42mm]{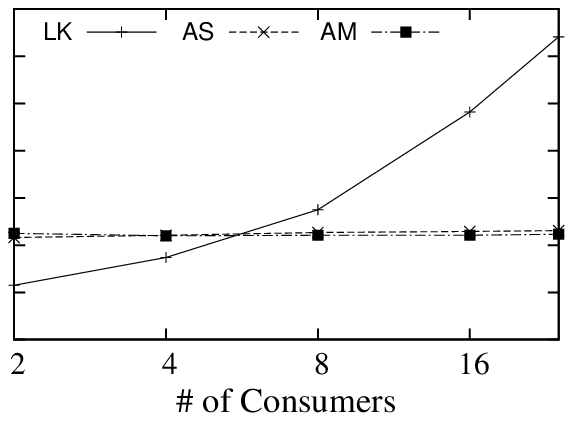}
  \caption{Parametrized bounded-buffer}
  \label{fig:ppc_eval}
\end{subfigure}
\begin{subfigure}[b]{0.32\textwidth}
  \centering
  \includegraphics[height=42mm]{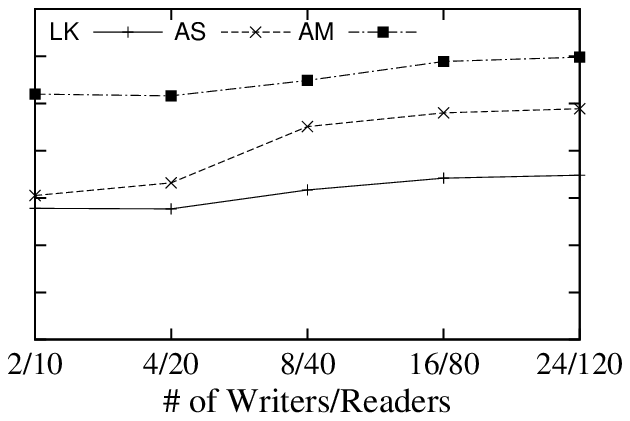}
  \caption{Ticketed r/w problem}
  \label{fig:trw_eval}
\end{subfigure}
\caption{Runtimes for RR, PBB, and TR/W problems: all plots have the same range on y-axis}
  \label{fig:rrruntimes}
\end{figure*}

\begin{flushleft} {\bf Results}\end{flushleft}
Fig.~\ref{fig:rrruntimes} presents the runtimes of {\sf \small LK}, {\sf \small AS}, and {\sf \small AM} based implementations
for the three problems. 
For the RR problem, the performance of {\sf \small AM} is $~6$ and $~5$ times faster than that of {\sf \small AS}, and
{\sf
\small LK}, respectively. 
This is due to better cache 
locality and fewer context switches. For this problem, after accessing the critical 
section, every thread is forced to wait unless all other threads have accessed the 
critical section. This constraint leads to poor cache locality and more context switches for 
conventional monitor design. 
However, in our approach, the monitor thread continuously executes all monitor
accesses for each thread; therefore, there is minimal number of context switches between each 
monitor method invocation and that results in excellent cache locality.

For the TR/W problems, {\sf \small AM} performs poorly in comparison to {\sf \small LK}. This is primarily due to the nature of the problem and its underlying implementation 
requirements. The problem requires conditional synchronization in the middle of the method calls, 
and thus our task based approach has to create two tasks per method call. In fact, any family of problems with 
such characteristics would lead to much higher number of tasks than the actual method calls. 
%Recall that we perform saturation 

\subsection{CPU and Memory Consumption}
Table~\ref{tab:cpu_mem} presents the results on the CPU and memory usage by {\sf \small LK}, {\sf \small AS}, and {\sf \small AM}
for 
all the evaluated problems. For each problem, the CPU ratio stands for the ratio of an implementation's mean (across runs)
CPU consumption   
and mean (across runs) CPU consumption of {\sf \small LK} implementation. As per memory usage, we report the maximum 
memory (in MBs) consumed across any of the runs performed by the implementations for each problem. As expected, our approach consumes more CPU resources. However, the maximum memory consumed by our approach is only incrementally 
larger than that of Java's reentrant lock based approach. Thus, we have backed our claim that with careful resource management
the cost for the performance benefit is relatively small. 
%and CPU consumptions in 
\remove{
\begin{table*}
    \centering
\begin{tabular}{|l||r|r|r||r|r|r|}
\hline
Problem & \multicolumn{3}{c||}{{\bf Bounded-Buffer}} 
          & \multicolumn{3}{c|}{{\bf Sorted Linked-List}}             \\ \hline
Approach          & {\sf \small LK}& {\sf \small AS}& {\sf \small AM}
& {\sf \small LK}& {\sf \small AS}& {\sf \small AM}\\ \hline 
Avg. CPU Ratio      & 1.00 &  1.11 & 1.95 
                    & 1.00 & 0.86 & 3.49 \\ 
%avg. cpu usage (\%) &  151 &  168 & 294 & 152 & 167 & 451 & 189 & 233 & 234 & 
%183 & 157 & 639 \\ \hline
Max Memory (MBs)    & 196 &  236 & 197 
                    & 135 & 135 & 145              \\ \hline
\end{tabular}
\caption{CPU and memory consumption with Reentrant locks ({\sf \small LK}), {\em AutoSynch} ({\sf \small AS}), and {\em ActiveMonitor} ({\sf \small AM}).} 
\label{tab:cpu_mem}
\end{table*}

\begin{table*}
    \centering
    %\scriptsize
    \footnotesize
\begin{tabular}{|l||r|r|r||r|r|r||r|r|r||r|r|r||r|r|r|}
\hline
Problem & \multicolumn{3}{c||}{{\bf Bounded-Buffer}} 
          & \multicolumn{3}{c||}{{\bf Sorted Linked-List}}          &   
\multicolumn{3}{|c||}{{\bf Round-Robin}} &
\multicolumn{3}{c||}{{\bf Parametrized}} & 
          \multicolumn{3}{c|}{{\bf Readers/Writers}}              \\ \hline 
Approach          & {\sf \small LK}& {\sf \small AS}& {\sf \small AM}& {\sf \small LK}& {\sf \small AS}& {\sf \small AM}& 
{\sf \small LK}& {\sf \small AS}& {\sf \small AM}& {\sf \small LK}& {\sf \small AS}& {\sf \small AM}& {\sf \small LK}& {\sf \small AS}& {\sf \small AM}\\ \hline
CPU Ratio & 1.00 &  1.11 & 1.95 
                    & 1.00 & 0.86 & 3.49 & 
1.00 & 1.10 & 2.97 
& 1.00 &  2.39 & 2.47
& 1.00 & 1.23 & 1.24 \\ 
%avg. cpu usage (\%) &  151 &  168 & 294 & 152 & 167 & 451 & 189 & 233 & 234 & 
%183 & 157 & 639 
%                   & 152 & 167 & 451  
%                    & 123 & 294 & 304
%                    & 189 & 233 & 234 
Max Mem.(MBs)    & 196 &  236 & 197 
                    & 135 & 135 & 145             & 
236 & 224 & 250 
& 196 & 227 & 251 
& 211 & 232 & 244 \\ \hline
\end{tabular}
\caption{CPU and memory consumption with Reentrant locks ({\sf \small LK}), {\em AutoSynch} ({\sf \small AS}), and {\em ActiveMonitor} ({\sf \small AM}).} 
\label{tab:cpu_mem}
\end{table*}
}
\begin{table*}
    \centering
    \footnotesize
\begin{tabular}{|l||r|r|r||r|r|r||r|r|r||r|r|r||r|r|r|}
\hline
Problem & \multicolumn{3}{c||}{{\bf BB}} 
          & \multicolumn{3}{c||}{{\bf SLL}}          &   
\multicolumn{3}{|c||}{{\bf RR}} &
\multicolumn{3}{c||}{{\bf PBB}} & 
          \multicolumn{3}{c|}{{\bf TR/W}}              \\ \hline 
Approach          & {\sf \small LK}& {\sf \small AS}& {\sf \small AM}& {\sf \small LK}& {\sf \small AS}& {\sf \small AM}& 
{\sf \small LK}& {\sf \small AS}& {\sf \small AM}& {\sf \small LK}& {\sf \small AS}& {\sf \small AM}& {\sf \small LK}& {\sf \small AS}& {\sf \small AM}\\ \hline
CPU Ratio & 1.00 &  1.11 & 1.95 
                    & 1.00 & 0.86 & 3.49 & 
1.00 & 1.10 & 2.97 
& 1.00 &  2.39 & 2.47
& 1.00 & 1.23 & 1.24 \\ 
%avg. cpu usage (\%) &  151 &  168 & 294 & 152 & 167 & 451 & 189 & 233 & 234 & 
%183 & 157 & 639 
%                   & 152 & 167 & 451  
%                    & 123 & 294 & 304
%                    & 189 & 233 & 234 
Max Mem.    & 196 &  236 & 197 
                    & 135 & 135 & 145             & 
236 & 224 & 250 
& 196 & 227 & 251 
& 211 & 232 & 244 \\ \hline
\end{tabular}
\caption{CPU and memory consumption with Reentrant locks ({\sf \small LK}), {\em AutoSynch} ({\sf \small AS}), and {\em ActiveMonitor} ({\sf \small AM}).} 
\label{tab:cpu_mem}
\end{table*}

%\begin{figure}[ht!]
  %\centering
  %\includegraphics[width=70mm]{}
  %\caption{The results of dining philosophers problem}
  %\label{fig:dp_eval}
%\end{figure}

%\begin{figure}[ht!]
  %\centering
  %\includegraphics[width=70mm]{}
  %\caption{The results of parameterized bounded-buffer problem}
  %\label{fig:rpc_eval}
%\end{figure}

%\begin{figure}[ht!]
  %\centering
  %\includegraphics[width=70mm]{}
  %\caption{The speedup for monitors with equal computing distribution}
  %\label{fig:speedup_eval}
%\end{figure}

%\begin{table*}
%    \begin{tabular}{l|r|r|r}
%    ~             & LK & AS & AM \\ \hline
%    I refs        & 474,269,544     & 1,478,760,760 & 3,883,986,799 \\
%    I1 misses   & 1,286,657       & 3,053,213     & 3,078,931  \\
%    I1 miss rate & 0.27\% & 0.20\% & 0.07\%         \\
%    LLi misses    & 311,041         & 1,038,141     & 946,070 \\
%    LLi miss rate & 0.06\%          & 0.07\%        & 0.02\%         \\ \hline
%    D refs      & 235,104,417     & 689,135,227   & 1,607,183,078\\
%    D1 misses    & 4,938,625       & 14,102,259    & 36,477,958\\
%    D1 miss rate & 2.1\%           & 2.0\%         & 2.2\%          \\
%    LLd misses    & 1,035,121       & 4,700,662     & 6,640,201\\
%    LLd miss rate & 0.4\%           & 0.6\%         & 0.4\%          \\ \hline
%    LL refs       & 6,225,282       & 17,155,472    & 39,556,889\\
%    LL misses     & 1,346,162       &  5,738,803    & 7,586,271\\
%    LL miss rate  & 0.1\%           & 0.2\%         & 0.1\%         \\
%    \end{tabular}
%    \caption{Cache-missing comparasion}
%\end{table*}

\section{Discussion \& Future Work}
\label{sec:discussion}
%\subsection{Pool Size}
%We now discuss some key pros and cons for our approach, as well as possible improvements as future work. 
The benefits of our approach are manifold: improved runtime performance, better control over desired program behavior, 
and increased opportunities for applying optimizations such as execution re-ordering, operation combining, etc. However, 
some aspects of our framework are open to criticism.  
 First and foremost, based on the design assumptions as well as the experimental results, it is evident that at present, our approach 
does not suit systems on which processor and memory resources are scarce. Even though our evaluation results indicate 
slightly increased memory consumption, the increase in CPU consumption is more drastic. 
Hence, an important future work is an improved implementation of our framework that lowers the CPU usage. 
Another future work is to 
address the issue of task generation for problems that require conditional synchronization in the middle 
of the method call, eg. Ticket-R/W, and causes large number of tasks to be created. 
It is important 
to note that our approach extends conventional multi-threaded programs and their constructs, and thus provides a programmer
the choice of an implementation she thinks is most suited for her needs. 

%The extended set of evaluation results in Appedix~\ref{app:results} highlight an interesting scenario. For  
%$H_2O$ and Ticketed-Readers/Writers problems, our framework performs poorly in comparison to conventional  
%monitors using Java's reentrant locks. This is primarily due to the nature of the problems and their underlying implementation 
%requirements. For both of these problems, the conditional synchronization is performed in the middle of the method calls, 
%and thus the task based invocation is required to create two tasks per method call. In fact, any family of problems with 
%such characterisitics would lead to much higher number of tasks than the actual method calls, and would perform worse than 
%the traditional monitors. 

%and our framework to write  
\remove{
Proposing changes to a well established programming 
paradigm is an ambitious goal. As the parallel 
programming community has already witnessed in the 
case of transactional memory \cite{hm93, st95}, 
realizing ideas involving elegant syntax is a difficult
challenge when faced with the constraint of 
providing acceptable performance. Moreover, in order 
to convince 
 programmers to adopt a new approach in favor 
of the conventional one, and thus to overcome the inertia
that results from the familiarity with existing 
concepts and constructs, requires that the proposed 
approach offers significant gains -- either in terms 
of runtime performance or programming ease/productivity.
With this understanding, we 
now analyse our approach.   

In terms of programming
ease and familiarity with concepts, we believe that 
our approach has a considerable advantage. We only 
require that the programmer use three additional 
keywords -- {\sf activemonitor}, {\sf waituntil}, 
and {\sf nonblocking}. Just by using these three 
keywords, the programmers benefit from simplified 
and shorter programs that significantly reduce the possibility 
of many bugs. The performance gains 
achieved by the modified syntax are noticeable, 
and quite considerable 
in many cases. However, these benefits do incur a cost.  
The results of our experimental 
evaluation show that the {\em ActiveMonitor} framework
increases the CPU and memory consumption of the programs. 
Hence, our framework may not be the preferred choice
of programmers when both of these resources are scarce.
Our approach has a completely different focus; that is 
to provide faster runtimes for multi-threaded programs
that use monitor objects for conditional synchronization 
under the assumption that both memory and CPU are 
available for increased consumption. In addition, 
working with Java poses difficult challenges for 
memory management. Unlike C/C++ based implementations
it is difficult to do efficient pre-allocation of
memory for Java based programs. Not only does 
this problem prevented us from reducing the memory 
footprint of our implementation, but we conjecture 
that it possibly 
had an impact on the runtime performance of 
the programs implemented in our framework. We believe 
that a C/C++ implementation of our framework
could provide even shorter runtimes with reduced 
memory consumption. In the light of the experimental 
results, we further discuss some limitations of our 
framework and analyse how to improve it. 

\subsection{Poor runtimes for a family of programs}
The $H_2O$ and ticket-based readers/writers problems
exhibit that our implementation performs poorly
in comparison to other techniques. These two programs
fall under a family of programs that should perform
poorly using our approach. 
Because the waituntil statement is in the middle, and we create two tasks. Hence for each function we are
forced to execute two tasks. 
{\bf why does this happen}.  
}
%\subsection{Requirements on non-blocking executions}
We showed that we provide linearlizability of executions --- interpreted under the non-blocking execution model. Although 
it is desirable in most of the cases, there are 
situations when this requirements can be relaxed 
to certain extents
for even faster performance. Such relaxations are 
very much dependent on the use-cases and the 
expectations of the programmer.  
As shown by Kogan et al. in  
\cite{kogan_future}, having `weak' consistency requirements could lead to 
significant performance gains. In fact, for specific use-cases programmers 
may only need {\em eventual consistency} \cite{vog09}, which requires that 
any update to shared data is eventually committed, in their 
programs. However, allowing 
weak or eventual consistency is not possible for all use-cases. For example, consider multi-threaded 
implementations (in shared memory model)
of Dijkstra's single-source-shortest-path algorithm \cite{sssp, delta}. Weak or eventual consistency based implementations for this algorithm 
cannot guarantee correct results. However, for a large number of problems, approximate solutions are good enough, and their parallel 
implementations exhibit significant gains in performance. Thus, this opens up two interesting future problems for our work.
First of them is  
 evaluating performance of our non-blocking execution approach for problems that admit approximate solutions, and 
 the second is to extend our evaluation to the weak consistency model of \cite{kogan_future}.  
%{Experimental Analysis}

When it comes to evaluating the performance of our proposed design and its implementation, the exploration space is large. 
Not only there is a large number of problems related to multi-threaded updates on data structures, but the 
set of parameters that affect the 
runtime for these problems is also large. 
%In those terms, our current evaluation is limited; this is also because
%our first goal was to study the feasibility of our design and establish if it provides 
%any benefits. After observing visible gains in performance in our evaluation, we plan to extend it 
%to a much larger set of problems. 
We also plan to perform experiments by varying more parameters such as load distributions of methods, size of shared data, etc.
Another key experiment is to analyze the cache locality of our approach --- in terms 
of cache miss rates per operation and per instruction --- for different experiments, 
 and comparing it to cache locality achieved by other techniques. %[Lost for what to say about this now.]
%Note that for Java based implementations 
%this task is challenging, because most of the cache analysis tools for Java  

Lastly, note that our current implementation does not use ideas such as
operation combining \cite{his+10}, 
 execution re-ordering, and batch updates \cite{kogan_future}.  Intuitively, incorporating 
these strategies in our implementation could lead to even better results than those  
observed in this paper. 
%We intend to improve our implementation to use these techniques and re-evaluate 
%the performance with the modifications.  

\section{Related Work}
\label{sec:related}
The idea of having
monitor objects execute as independent
threads is not completely novel. Hoare proposed 
a similar
mechanism, in which all objects are {\em active},   
of communicating sequential processes (CSP) in \cite{csp} long ago. However, the proposal 
was a formal language for interactions between processes. Its constructs
are still mostly used for formal analysis. Our work focuses on making parallel programs faster
by using {\em futures} 
to provide selective non-blocking 
executions on monitor methods. 

Research efforts to increase parallelism in programs 
have been continuous almost since the advent of computing.  With the arrival of multi-core processors, 
improving multi-threaded programming with syntactic 
changes, as well as with novel designs
of data structures and algorithms, 
has been the focus of many research efforts in the last 
decade. Transactional
memory \cite{hm93, st95} is 
a well-known research effort that proposes
modified syntax for ease of writing multi-threaded programs. The prominent contributions in algorithmic direction are:
lock-free data structures 
\cite{har01,mic02,hsy04}, and wait-free
data structures \cite{kp11, kp12, tbk+12}. Some other research efforts have explored increased parallelism 
for data structures by operation re-ordering and combining complementary operations (insert and remove) \cite{hhl+06}. 

{\bf Transactional memory} \cite{hm93, st95} introduced 
an elegant and much needed design for writing
multi-threaded programs on shared memory. The 
promising feature in transactional memory proposal was to provide 
 simplified syntax that guarantees correctness, and 
 delegates the handling of mutual exclusion to the 
 library providing the implementation. Our approach and ideology is similar on these goals. However, constructs for 
 conditional waiting under transactional memory are limited
 \cite{no_wait_tm, lm11, ds09}. Hence,  
 writing many conditional synchronization based multi-threaded programs is rather difficult. 
 \remove{
 In addition, most of the 
 existing implementations \cite{tmimpl1,tmimpl2,tmimpl3}
 have not delivered on the performance side, at least not
 yet. For most 
 of the prevalent use cases, the existing 
 implementations, have inferior runtime performance, in 
 terms of overall throughput, in comparison to 
 the standard multi-threaded implementations that use
 conventional locks. The well known reason for the lack
 of performance is the lack of hardware support for 
 {\em atomic} block instructions on any platform, 
 and the resulting additional work transactional 
 memory implementations have to perform in 
 tracking the conflicting updates and rolling back 
 some transactions.} Also, unlike transactional memory, our approach 
 merely transfers the responsibility of data 
 manipulation to a single
 monitor thread and does not require any complicated rollback mechanism for resolving
 conflicting
 updates on the shared data.  

{Designing {\bf lock-free} and {\bf wait-free} algorithms 
is a challenging problem for any use-case. In addition, lock-free and wait-free implementations are currently available for a small set 
of data structures \cite{fhs04, her88}. Even though there is a lot of ongoing research to grow this set, the growth 
is slow. 
%Additionally, even though wait-free algorithms have $\mathcal{O}(1)$ time complexity for the operations, the constants 
%involved are high, and thus their implementations usually show poor runtime performance  \cite{do_we_have_one}. 
By using our approach, the end user 
benefits by an additional level of  abstraction and does 
not have to deal with the complications involved in understanding and correctly implementing
wait-free 
and lock-free algorithms. 
%Obviously, skilled and
%experience programmers may not feel the necessity
%to use our implementation.  

We use the concept of {\bf \em futures} \cite{future,lea05} to realize the 
idea of non-blocking/asynchronous executions.
The work by Kogan et al. \cite{kogan_future} is similar to our approach, and makes 
use of {\em futures} for non-blocking executions. However, 
there are few key differences between our work and
\cite{kogan_future}. We explore changes to the paradigm
of monitors, especially for conditional synchronization problems, whereas \cite{kogan_future} focuses on 
throughputs, under 
saturation scenarios, 
of data structures such as stacks, queues, 
and linked-lists. Hence, our approach provides a more
generic tool for performing safe updates and 
synchronization between threads. 
%In addition, 
%we treat monitor objects as independent threads that
%act as executors of monitor tasks, whereas
%in \cite{kogan_future} the worker threads assume the 
%responsibility of finishing the tasks. 

\remove{
issues\\
Lockfree and Waitfree - difficult to program, does 
not necessarily lead to faster performance\\
Java concurrent structures using lockfree and waitfree
algorithms\\
}

{\bf Frameworks} such as Legion \cite{legion} and Galois 
\cite{pnk+11} focus on improving runtime 
performance of parallel programs on scientific simulations and large graphs, respectively. Thus, they 
do not directly relate to thread synchronization and 
notification based use-cases such as bounded-buffer.
Secondly, 
both of these libraries require advanced skills 
and familiarity with the features and syntactic details
exposed by the libraries. Hence, the performance gains 
achieved by using them are highly 
correlated with a programmer's knowledge of the involved optimization parameters needed to tweak the executions for optimal performance. 

\remove{
have focused on improving parallel 
programs with designs that
 
Alex Aiken - Legion\\
Galois\\
AutoSynch and other parallel execution libraries/frameworks
}

\section{Conclusion} \label{sec:conclusion}
Proposing changes to a well established programming 
paradigm is an ambitious goal. Moreover, convincing 
 programmers to adopt a new programming framework in favor 
of the conventional one, usually requires the proposed 
approach to offer gains in terms 
of runtime performance as well as programming ease/productivity. The {\em ActiveMonitor} 
framework is a step in this direction. In almost all of the problems used in our evaluation, monitors with non-blocking 
executions outperform Java's reentrant lock based monitors. 
Additionally, this gain is achieved by a simple mechanism of using only two new keywords in 
existing Java programs. At present, our prototype implementation is costly in terms of processor 
usage. However, with the ubiquity of multi-core processors, optimized implementations 
of our proposed technique could lead to significant advantages at reduced costs. In this paper, our initial  
exploration in a new direction, that is: non-blocking executions on monitors, has showed promising results.
We hope that with further research 
in this direction, our approach would lead to even stronger results.

\bibliographystyle{abbrv}
\bibliography{ref}  % sigproc.bib is the name of the Bibliography in this case

\hfill\\
\begin{flushleft}
{\Large \bf Appendix}
\end{flushleft}
\appendix
\section{Implicit/Automatic signal monitors:}
\label{app:autosynch}
Monitors can be divided into 
two categories according to the different implementations of conditional 
synchronization: explicit-signal monitors and implicit-signal monitors \cite{bh05b}. In  explicit-signal monitors, condition
        variables are used along with {\em wait}, and {\em notfiy/notifyAll} statements for 
        conditional synchronization between 
        threads. 
    Programmers need to associate assertions with condition variables manually.
When a thread wants to execute a set of instructions 
under a critical section, it first checks 
some condition variable(s), and waits
    if the predicate is not true. Once a thread enters the waiting state, another thread 
    %-- mostly one 
%that changes the state --
on detecting that the conditional 
     predicate has become true, explicitly notifies
     the waiting thread. The bounded-buffer implementation 
     in Fig.~\ref{fig:bb_exp}(a) that uses Java's reentrant locks is an explicit-signal monitor.  In fact, 
almost all of the prevalent programming languages, including Java,  
use the explicit-signal design for monitors. 

Implicit-signal monitors, also called {\em automatic-signal} monitors, require the underlying system to handle
thread 
synchronization and wait/notification. Recently, Hung et al.  \cite{hg13} showed that implicit-signal monitors can be beneficial when implemented on 
modern multi-core processors. In an implicit-signal design, programmers need to 
    use {\sf \small waituntil}
    statements (line $9,15$ in automatic-signal program in
    Fig.~\ref{fig:bb_exp}(b)) instead of condition variables for
    synchronization. In this 
    monitor design, a thread will wait as long as the condition of a {\sf \small waituntil}
    statement is false, and execute the remaining instructions only after the condition 
    becomes true. The responsibility of identifying 
    a waiting thread to signal, and then 
    signaling it, is that of 
    the underlying system/framework rather 
    than of the program logic explicitly 
   put in the thread by a programmer. The details 
   of design and implementations of such monitors
   is beyond the scope of this paper, and we refer
   the interested reader to \cite{hoa74,hg13}.

\section{Monitor Task Examples}
\label{app:task_ex}
Here, we highlight some key aspects of our framework's handling of monitor tasks. First, note that the set of statements, 
$\mathcal{S}$, of a 
task may be empty. Such a task
serves as a barrier; and is legal in our framework. Second, the precondition can either be absent or appear later, and
not as
the first statement, in the monitor method.
When a monitor method has no precondition, our framework creates a
task with the precondition as tautology, indicating that the task can be
executed at any time. 
If a monitor method does not start with a {\sf \small waituntil} statement but has some 
such statement in between, then the precondition of the first derived task is a tautology. 
In Fig.~\ref{fig:method_ex}, the method 
{\tt bar()} has two corresponding tasks $T_1$ and $T_2$, where $T1$ has the 
precondition as tautology and the set of statements $S_1$; $T_2$ has the 
precondition $P$ and the set of statements $S_2$.

Furthermore, 
monitor tasks are compositional in nature. Consider  the method {\tt foo()} in Fig.~\ref{fig:method_ex}. If a method
declares $n$ sets of preconditions and statements, 
then the framework would generate $n$ tasks such that 
 each task $T_i$, $1 \le i \le n$, has 
a precondition $P_i$ and a corresponding set of statements $\mathcal{S}_i$. 
Appendix~\ref{app:task_ex} presents some detailed examples of these cases.

\begin{figure}[ht!]
\begin{multicols}{2}
    \begin{Verbatim}[fontsize=\footnotesize,gobble=6,
            codes={\catcode`$=3\catcode`_=8}]
        void foo() {
            $waituntil(P_1)$;
            $S_1$;
            $waituntil(P_2)$;
            $S_2$;
            ...
            $waituntil(P_n)$;
            $S_n$;
        }
    \end{Verbatim}
    \columnbreak 
    \begin{Verbatim}[fontsize=\footnotesize,gobble=2,
            codes={\catcode`$=3\catcode`_=8}]


        void bar() {
            $S_1$;
            $waituntil(P)$;
            $S_2$;
        }
    \end{Verbatim}
\end{multicols}
    \caption{Examples of methods leading to compositional tasks and tautology as precondition}
    \label{fig:method_ex}
\end{figure}

\section{Lemma Proofs}
\subsection{Proof Sketch for Lemma \ref{lem:lock}}
\label{app:proof_lock}

We show that for any execution in our model there exists 
an equivalent lock-based execution.
Since all tasks of any monitor object are executed by a single thread 
due to Rule~\ref{def:mutex}, mutual exclusion is preserved just as in any lock-based
execution. We only need to show that the order of execution of the tasks
corresponds to a schedule in which worker threads execute the tasks.

\remove{we show that the sequence of tasks executed corresponds to a
sequence possible for a lock-based execution.}
It is sufficient to show that all tasks submitted by a single worker
thread execute in the order of submissions.
Let $s$ and $t$ be two consecutive tasks submitted by a worker thread.
If they are submitted for the same monitor, then the Rule~\ref{rule:sc} preserves the order. 
%is preserved due to . 
If $s$ is a blocking task, then by definition of blocking task, 
$t$ cannot be submitted before $s$ 
is completed. Hence, execution of $s$ precedes execution of $t$.
%If $s$ is non-blocking and on the same monitor as $t$, then Rule~$1$
%enforces that 
If $s$ is a non-blocking task and is on a different monitor object from $t$,
then due to Rule~\ref{rule:linear_active1} we wait for $s$ to finish before submission of $t$.
%\hfill $\blacksquare$\\
%[]
%\end{proof}

\subsection{Proof Sketch for Lemma \ref{lem:linearizable}}
\label{app:proof_linearizable}
%\begin{proof}
It is sufficient to show that
for every thread history there exists an equivalent
sequential thread history that is consistent with the execution by the monitor thread.
We get this sequence by considering as linearization point
for an operation the instant at which the monitor thread finishes
executing the corresponding task. We show that this order
is consistent with the thread order.

Let $s < t$ denote that operation(i.e. its corresponding task) $s$ was executed before $t$. 
First consider the case when $s < t$ because $s$ is a blocking operation.
In this case, task $t$ cannot be submitted before task $s$
is executed. Hence, the order of execution of monitor tasks preserves the thread
order. Second, consider the case
when $s < t$ because they are operations on the same monitor object and $s$ is
invoked before $t$. In this case Rule~\ref{rule:sc} guarantees that the order of execution
is $s$ followed by $t$.
Finally, consider the case when $s$ and $t$ are tasks submitted by 
the same thread, $s$ is non-blocking but its results are used by $t$.
In this case, the execution would have to block for collecting the results of $s$ before
executing any further. Hence $t$ would be submitted later than the execution 
of $s$.
%\end{proof}

\section{Implementation Details}
\label{app:impl}
The translation of the code written with our two proposed keywords to an equivalent Java code is performed  by
using a pre-processor. For our implementation, we use JavaCC \cite{kod04} pre-processor. 
%Having a simple and quite small 
%set of additional keywords provides the benefit of efficient pre-processing. 
We convert all the monitor 
methods to equivalent tasks, and use a threadpool based executor service framework for monitor threads. 
Here, we first discuss the pre-processing steps.

\subsection{Pre-processing}
\label{sec:app_preproc}
We briefly summarize the concepts applied for compiling the code, using the {\em ActiveMonitor} framework, 
in which the two proposed keywords are used. 
%As we have already emphasized, the tasks corresponding to method invocations on
%monitor objects are generated by our framework as opposed to being explicitly
%defined by the programmer. This way, our framework provides ease of
%programming, while also maintaing backward compatibility (for existing multi-
%threaded Java programs), and can still provide a performance benefit. The
%derivation of monitor tasks is performed by a preprocessor using the steps
%described as follows.  

First, our pre-processor identifies the {\sf \small waituntil} statements that
capture the preconditions of monitor tasks. Recall that the {\sf \small
waituntil}  keyword is used for automatic signaling between threads, and the
predicate provided as its argument forms the precondition for execution of
monitor tasks. In the corresponding generated code (in Java) every predicate is
created as an inner class with a method called $isTrue()$, which returns a boolean value 
as the 
result of the predicate evaluation. At runtime, our system invokes the $isTrue()$ method to evaluate  the
predicate when deciding which task should be executed. 

Next, the pre-processor creates a {\sf \small Callable} \cite{lea05} or {\sf \small Runnable} \cite{lea05}
object that contains the set of statements for each task inside its $call()$ or
$run()$ method. If the return value of the original monitor method is void ,
then we create a {\sf \small Runnable} object; otherwise, we create a {\sf
\small Callable} object. Note
that if a non-blocking monitor method has been divided into multiple tasks, the
resulting tasks may have some shared variables. To handle these variables, our
system creates a data object to store them and all the tasks can access them
through this data object.

Finally, we replace the invocations of monitor methods by submitting monitor
tasks to threadpool executor (Appendix~D.2). The executor returns {\sf \small f},
 an instance of Java's {\sf \small Future} object \cite{lea05},  to the invoker thread when a task is submitted. If the task is
blocking, then the thread needs to wait for the computation in {\sf \small f} to finish;
this is done by evaluating {\sf \small f} within the task and returning the result.
Otherwise, {\sf \small f} is registered with the threadpool executor for exception checking
and handling.  

In Section~\ref{sec:concept}, we mentioned that our current implementation does not admit a particular family of methods for 
task based executions. Specifically, it does not admit recursive blocking calls, monitor methods in which conditional synchronization guards using automatic thread signaling ({\sf \small waituntil}) are not initial statements of the program, or are present in conditional branches. This does not 
mean that programs requiring such method implementations are not allowed in our framework; just that at present we do not support
task generation of such methods. A programmer can implement them using conventional Java syntax, and 
still use our framework for other methods. 

% Discussion
\subsubsection{Discussion: Thread Dependent Variables and Functions}
In our current implementation, 
thread dependent variables and functions within a monitor method cannot be used
directly in the {\sf \small Runnable} or {\sf \small Callable} object that is used in task generation
by our approach. This is because the tasks are executed by
the monitor thread and not by the worker thread. For example, suppose there is a 
monitor method that invokes {\sf \small Thread.currentThread()}, if we directly
add this statement to the generated {\sf \small Runnable} object (in the task), then this method's invocation
at runtime will return the 
reference to the monitor thread
when it is executed. However, it is obvious that the intent of this call inside the monitor method 
was to refer to the worker thread. To handle this situation, currently, we require the programmer 
to perform reference copy and storage and storage in thread-local variables. 
For read operations of thread dependent variables and functions, the worker thread
would need to evaluate them outside the monitor, and store the result with final variables. These final variables
can be accessed by the runnable and callable objects. 
An additional constraint/limitation applies for the case of write operation on thread dependent 
variables. For write operations, if the monitor method is non-blocking then the
results can be stored as intermediate data. The worker thread then writes these
results back to its local variable after the task is executed. 

\subsubsection{Discussion: Blocking recursive method}

Our current pre-processing implementation does not support a {\emph blocking} recursive method on monitors.
This is because the number
of the method invocations to be made at the runtime is non-deterministic. Thus, we cannot know how many tasks we
need to create at pre-processing time. In addition, since the method is
blocking, the monitor thread will get blocked when it recurs. 

\subsection{Limiting Monitor Threads using ThreadPools} \label{sec:threadpool} 
A threadpool is
a collection of threads to execute tasks that are units of computation. These
threads are usually created together at the start-up , and remain in the pool to
provide executor service. For  task execution based programs, the use of a
threadpool  can significantly improve the runtime performance by having an
already existing thread ready to execute the tasks as they arrive.  Generally,
tasks are stored in a collection, and free threads are responsible for finishing
the unexecuted tasks.  If there is no task that is eligible for execution, the
threads in the pool wait for one to arrive.  This approach is especially
beneficial when the number of tasks is greater than the number of threads in a
pool. Thus, the size of the threadpool is a crucial factor as the timing of
creating or destroying a thread may have a significant impact on performance.
Given that we spawn a new thread for a monitor object, programs with a large
number of monitors could be adversely affected by having too many additional
threads in the {\em ActiveMonitor} framework.

To keep the 
overhead of monitor threads relatively low,
we control the number of monitor threads created by querying the operating system 
for available resources and managing the monitor threads dynamically. A component, called {\em Monitor ThreadPool Executor}, 
manages this process. %and uses a threadpool for monitor threads. 
This component  instantiates monitor threads as threads of
a threadpool
%\footnote{Appedix~\ref{sec:threadpool} presents the concepts of threadpools and how they are used in our framework's implementation} 
at the start of the execution.  
The pre-processing phase collects the information about number of 
monitor objects in the program, and the threadpool executor uses this information for creating a threadpool balancing
the resource consumption and performance benefit. 
 
\subsubsection{Exception Handling} \label{app:exception} For a non-blocking
method invocation, after submitting its corresponding task to the executor, the
invoker does not need to wait for the completion of the task. The task is
executed in parallel by a thread of the monitor threadpool. Thus, if an
exception occurs during its execution, the thread that submitted it must be
notified of this exception.  The {\em ActiveMonitor} framework has an exception
handler that keeps a log of every exception and provides different mechanisms
for programmers to handle exceptions in the non-blocking method. The users may
choose to ignore the exceptions or they can specify a maximum number of times a
task may be considered for   automatic re-tries. Furthermore, our system also
provides a hook so that the programmer can write their custom exception
handler.
\remove{
\section{Execution Order Policies}
\label{app:policy}

\begin{definition}[Safe Policy]
\label{def:safe} If a task $T$ of a monitor $M$ is executable and 
    there is no other executable task of $M$, then $T$ is executed.  
\end{definition}
Recall that a task is executable if its precondition 
is satisfied. 
The safe policy guarantees that executable tasks are eventually executed. The exact execution order of tasks depends on 
the runtime scenario. We use this policy by default, 
unless otherwise specified. 

\begin{definition}[Fair Policy] Let $sub(t)$ 
denote the time of task $t$'s submission at monitor $M$. 
Fair execution policy guarantees that 
$t$ is executed if there is no 
    other task $s$ (submitted to $M$), such that $s$ is executable and $sub(s) < sub(t)$. 
    \label{def:fair}
\end{definition}
The fair policy enforces fairness of method 
invocations across threads. 
\remove{This policy can be used to avoid starvation,
or to prohibit threads  
from accessing stale information.} Note that a fair execution policy 
is also safe policy. 

In addition, programmers can also assign priorities
to monitor methods. If a method is assigned 
a priority then all its corresponding tasks 
receive the same priority value. 
\begin{definition}[Priority Policy] Let $priority(t)$ 
denote the priority assigned to
the method that corresponds to task $t$. 
Priority policy guarantees that  on a monitor $M$, $t$ is
    executed if there is no other task $s$ (on $M$), such that $s$ is executable 
    and $priority(s) > priority(t)$. 
\end{definition}
In our prototype implementation, 
we restrict the priority levels to only 
two options: {\em high} and {\em low}.
}
\remove{

\begin{figure}[ht!]
    \begin{Verbatim}[fontsize=\footnotesize,gobble=7,numbers=left,numbersep=2pt]
         activemonitor class ReadersWritersMonitor {
          Thread waitingWriter;
          boolean isWriting;
          int rcnt;
          public ReadersWritersMonitor() {
            waitingWriter = null;
            isWriting = false;
            rcnt = 0; 
          }
          public void startRead() {
            waituntil(waitingWriter != null && !isWriting);
            rcnt++;
          }
          public nonblocking void endRead() {
            rcnt--;
          }
          public void startWrite() {
            waituntil(waitingWriter != null);
            waitingWriter = Thread.currentThread();
            waituntil(rcnt == 0 && isWriting == false && 
                    waitingWriter == Thread.currentThread()); 
            waitingWriter = null;
            isWriting = true;
          }
          public nonblocking void endWrite() {
            isWriting = false; 
          }
        }
    \end{Verbatim}
    \caption{The example of a general readers/writers monitor.}
    \label{fig:grw_ex}
\end{figure}
We make the following claims for this program: 
\begin{proposition}
    The readers/writers monitor shown in Fig.~\ref{fig:grw_ex} guarantees fairness
    when used with fair execution policy.
\end{proposition}
\begin{proposition}
    The readers/writers monitor shown in Fig.~\ref{fig:grw_ex} can be executed in a mode providing
    preference to readers 
    by using the priority policy and assigning priorities
    `high' to startRead() and `low' to startWrite() 
    methods. Just by inverting the priority
    assignments, the same program can be executed 
    in a mode that provides preference to writers. 
\end{proposition}
}

\remove{
\section{Extended Evaluation \& Results}
\label{app:results}
The four additional problems for which performed experimental evaluation are described below.  \\
    %\item [Sleeping barber \cite{dijk65, dijk71}] The problem requires
    %    coordination between a barber and customers. 
    
{\bf$H_2O$ barrier \cite{and99}}: This is a barrier based simulation of water molecule
        generation. There are two kinds of worker threads: $H$-Generator generates $H$ atoms, and $O$-Generator
        generates $O$ atoms.
        Each $H$-Generator generates an $H$ atom and waits if there is no $O$ 
        atom or another $H$ atom. Each $O$-Generator generates a $O$ atom and 
        waits if the number of $H$ atoms is less than $2$. All operations of
        the $H_2O$ barrier are blocking.\\
        
The following plots presents the runtime for the problems above with varying number of threads.

\begin{minipage}{0.48\textwidth}
%\begin{figure*}[ht!]
  \centering
  \includegraphics[height=55mm]{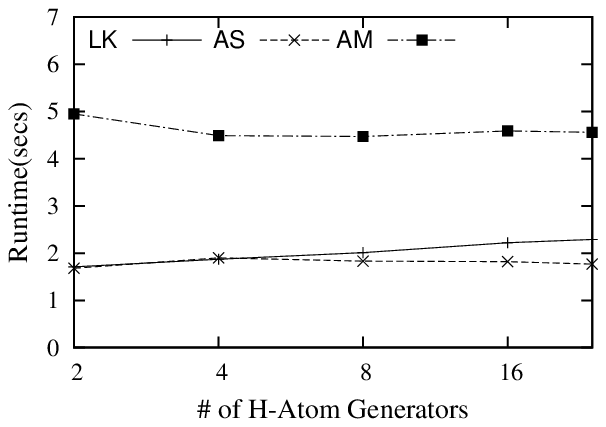}
  \captionof{figure}{$H_2O$ problem}
  \label{fig:h2o_eval}
%\end{figure*}
\end{minipage}
\begin{minipage}{0.48\textwidth}
    \centering
\begin{tabular}{|l||r|r|r|} \hline
Problem   & \multicolumn{3}{c|}{{\bf $H_2O$}} \\ \hline
Approach          & LK & AS & AM  \\ \hline
CPU Ratio    & 1.00 & 1.13 & 2.13 \\ 
%avg. cpu usage (\%) %                    & 126 & 142 & 269 \\ \hline
Max Mem.(MBs)    & 130 & 140 & 186 \\ \hline
\end{tabular}
\captionof{table}{CPU and memory consumption of $H_2O$} 
\end{minipage}
%\begin{minipage}{0.48\textwidth}
%\end{minipage}

\remove{
\begin{figure*}[ht!]
  \centering
\begin{subfigure}[b]{0.43\textwidth}
  \centering
  \includegraphics[height=55mm]{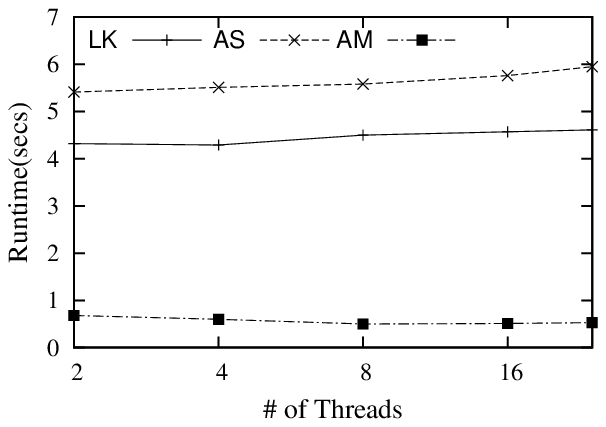}
  \caption{Round-robin problem}
  \label{fig:rr_eval}
\end{subfigure}
\begin{subfigure}[b]{0.43\textwidth}
  \centering
  \includegraphics[height=55mm]{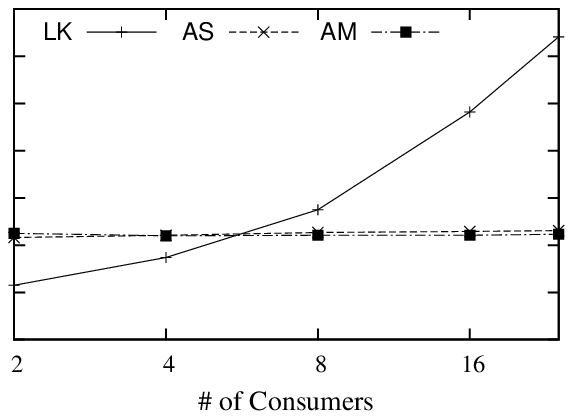}
  \caption{Parametrized bounded-buffer problem}
  \label{fig:ppc_eval}
\end{subfigure}
\begin{subfigure}[b]{0.43\textwidth}
  \centering
  \includegraphics[height=55mm]{fig/h2o.eps}
  \caption{$H_2O$ problem}
  \label{fig:h2o_eval}
\end{subfigure}
\begin{subfigure}[b]{0.43\textwidth}
  \centering
  \includegraphics[height=55mm]{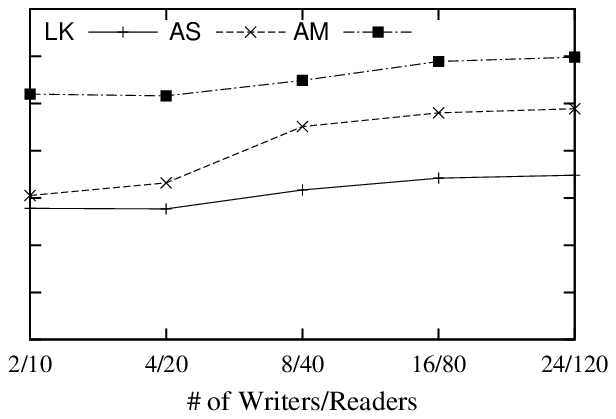}
  \caption{Ticketed r/w problem}
  \label{fig:trw_eval}
\end{subfigure}
%\caption{Runtimes for}
%Round-robin, $H_2O$, and ticketed R/W problems: all have the same range on y-axis }
\caption{Runtimes for Round-robin, parameterized bounded-buffer, $H_2O$, and ticketed R/W problems: all have the same range on y-axis }
  \label{fig:rrruntimes}
\end{figure*}

\begin{table*}
    \centering
\begin{tabular}{|l||r|r|r||r|r|r||r|r|r||r|r|r|}
\hline
Problem   & \multicolumn{3}{c||}{{\bf Round-Robin}} &
\multicolumn{3}{c||}{{\bf Parameterized}} & 
\multicolumn{3}{c||}{{\bf $H_2O$}} &
          \multicolumn{3}{c|}{{\bf Readers/Writers}}              \\ \hline 
Approach          & LK & AS & AM & LK & AS & AM & LK & AS & AM & LK & AS & AM  \\ \hline
Avg. CPU Ratio      & 1.00 & 1.10 & 2.97 
                    & 1.00 &  2.39 & 2.47
                    & 1.00 & 1.13 & 2.13
                    & 1.00 & 1.23 & 1.24 \\ 
%avg. cpu usage (\%) & 152 & 167 & 451  
%                    & 123 & 294 & 304
%                    & 126 & 142 & 269 \\ \hline
%                    & 189 & 233 & 234 
Max Memory (MBs)    & 236 & 224 & 250 
                    & 196 & 227 & 251 
                    & 130 & 140 & 186
                    & 211 & 232 & 244 \\ \hline
\end{tabular}
\caption{CPU and memory consumption with Reentrant locks (LK), {\em AutoSynch} (AS), and {\em ActiveMonitor} (AM).} 
\end{table*}
}
}

\end{document}